\newcommand{\redcom}[1]{{\color{black}#1}\xspace}
\let\l@ENGLISH\l@english
\renewcommand*{\@opargbegintheorem}[3]{\trivlist
  \item[\hskip \labelsep{\itshape #1\ #2}] {\itshape (#3):} {\normalfont}}
\newcommand{\BSdensity}  {\lambda_{B}}
\newcommand{\adensity} {\lambda_{a}}
\newcommand{\psuc} {\bar{p}_{\textrm{suc}}}
\newcommand{\achannel} {\bar{p}_{\textrm{utility}}}
\newcommand{\pndrop} {p_{\textrm{nodrop}}}
\newcommand{\apndrop} {\bar{p}_{\textrm{nodrop}}}
\newcommand{\pouta} {p_{\textrm{suc}1}}
\newcommand{\apouta} {\bar{p}_{\textrm{suc}1}}
\newcommand{\poutb} {p_{\textrm{suc}2}}
\newcommand{\apoutb} {\bar{p}_{\textrm{suc}2}}
\newcommand{\ii}{\textbf{i}}
\newtheorem{lemma}{Lemma}
\newtheorem{remark}{Remark}
\newtheorem{theorem}{Theorem}
\newtheorem{proposition}{Proposition}
\newtheorem{corollary}{Corollary}
\newcommand{\AuthorOne}{Jing~Guo, {\em{Student Member, IEEE}}}
\newcommand{\AuthorTwo}{Salman~Durrani, {\em{Senior Member, IEEE}}}
\newcommand{\AuthorFour}{Halim~Yanikomeroglu, {\em{Fellow, IEEE}}}
\newcommand{\AuthorThree}{Xiangyun~Zhou, {\em{Member, IEEE}}}
\newcommand{\ThankOne}{J. Guo, S. Durrani and X. Zhou are with the Research School of Engineering, The Australian National University, Canberra, ACT 2601, Australia (Emails: \{jing.guo, salman.durrani, xiangyun.zhou\}@anu.edu.au). H. Yanikomeroglu is with the Department of Systems and Computer Engineering, Carleton University, Ottawa, ON K1S 5B6, Canada (E-mail: halim@sce.carleton.ca).}
\begin{document}

\title{Massive Machine Type Communication with Data Aggregation and Resource Scheduling}
\author{\IEEEauthorblockN{\AuthorOne,~\AuthorTwo,~\AuthorThree,~and~\AuthorFour\thanks{\ThankOne}}}
\maketitle

\vspace{+5mm}

\begin{abstract}
 To enable massive machine type communication (mMTC), data aggregation is a promising approach to reduce the congestion caused by a massive number of machine type devices (MTDs). In this work, we consider a two-phase cellular-based mMTC network where MTDs transmit to aggregators (i.e., aggregation phase) and the aggregated data is then relayed to base stations (i.e., relaying phase). Due to the limited resources, the aggregators not only aggregate data, but also schedule resources among MTDs. We consider two scheduling schemes: random resource scheduling (RRS) and channel-aware resource scheduling (CRS). By leveraging the stochastic geometry, we present a tractable analytical framework to investigate the signal-to-interference ratio (SIR) for each phase, thereby computing the MTD success probability, the average number of successful MTDs and probability of successful channel utilization, which are the key metrics characterizing the overall mMTC performance. Our numerical results show that, although the CRS outperforms the RRS in terms of SIR at the aggregation phase, the simpler RRS has almost the same performance as the CRS for most cases with regards to the overall mMTC performance. Furthermore, the provision of more resources at the aggregation phase is not always beneficial to the mMTC performance.
\end{abstract}
%
\begin{IEEEkeywords}
Wireless communications, stochastic geometry, Massive machine type communication, data aggregation, resource scheduling.
\end{IEEEkeywords}

\ifCLASSOPTIONpeerreview
    \newpage
\fi

\section{Introduction}
Machine type communication (MTC) is envisaged to play a key role within future fifth generation networks~\cite{6231296}. With MTC, devices (i.e., smart meters, sensor nodes, appliances, etc.) can automatically ``talk" to each other without the human intervention, which will undoubtedly create unprecedented applications and new business models. While MTC brings great opportunities, it also poses significant challenges because of the distinct system requirements. Depending on the two major challenges, MTC can be classified into two types: \emph{massive machine type communication} (mMTC) and \emph{ultra-reliable machine type communication} (uMTC)~\cite{Bockelmann-2016}. mMTC is expected to provide massive access to a large number of often low-complexity and low-power machine type devices (MTDs). uMTC is expected to provide the network services for those MTDs with critical requirements in terms of latency and reliability. In this work, we will focus on the mMTC.

For mMTC, one way to handle a massive number of simultaneous device connections is to enhance the operation of random access channel of long term evolution and long term evolution advanced such that congestion and overloading can be reduced~\cite{Laya-2014}. Different methods have been proposed to offer more efficient access in the literature, such as the access class barring~\cite{3gpp}, prioritized random access~\cite{6720118} and backoff adjustment scheme~\cite{3gpp-2,6399192}. Another promising way to deal with the massive connection problem is the concept of data aggregation~\cite{6231296,Dawy-2015}. With the data aggregator, instead of the direct communication between MTDs and the core network (i.e., base station in cellular network), the traffic from MTDs is first transmitted to the designated data aggregator and the aggregator then relays the collected packets to the core network. Such an aggregation structure reduces the number of connections to the core networks thereby reducing congestion~\cite{6231296}. It also reduces the power consumption at the MTD side, since transmission link for MTD is greatly shortened~\cite{Dawy-2015}.

Recently, some papers have considered data aggregation in mMTC. For example, the MTD clustering problem was analyzed in both~\cite{6175028} and~\cite{Miao-2016}. The authors in~\cite{6175028} utilized the joint massive access control, while resource allocation to perform MTD grouping and an energy-efficiency cluster-head selection scheme was developed in~\cite{Miao-2016} to maximize the network life. The joint-user decoding was studied in~\cite{Pratas-2015} and a closed-form expression of the maximum zero-outage downlink rate was derived when multiple MTDs are attached to a cellular user. By employing the trunked radio system, the authors in~\cite{7248779} investigated the basic trade-off between latency and transmit power for delivering the aggregated traffic. These works, however, considered either a single aggregator or a single base station (BS) scenario and ignored the coverage nature of wireless transmission.

In a large scale cellular network with multiple aggregators and multiple BSs, interference exists and can affect the machine type communication when multiple aggregators (or BSs) share the same resource. There are only a few papers characterizing the interference in mMTC with data aggregation. In~\cite{Kwon-2013}, the distribution of the signal-to-interference ratio (SIR) for sensor nodes was derived and this work only focused on the single-hop scenario where the data from sensor nodes was aggregated at the data collectors. A clustering geometry model for MTD locations was introduced in~\cite{halim-2016}, where a two-hop clustering method based on slotted ALOHA was proposed and the analysis was based on simulations. The authors in~\cite{Malak-2016} analyzed the SIR coverage, rate coverage and energy consumption for a large-scale hierarchical wireless network with mMTC. Note that each device in~\cite{Malak-2016} was assumed to be guaranteed with an equal resource. However, when the spectrum resources are limited, it may be necessary to implement resource scheduling~\cite{6687313,7725919}. Furthermore, as envisioned in~\cite{Dawy-2015}, the aggregator should not be restricted to act as a simple relay and more intelligence at the aggregator (e.g., resource scheduling, spectrum sharing and so on) should help to improve the network efficiency.

Very recently, some papers have investigated the data aggregation for mMTC in the context of resource scheduling. Specifically, a resource allocation approach based on the consideration of the useful information content for individual MTDs was developed in~\cite{6504002}. By exploiting queueing-dynamics, a periodic scheduling algorithm was proposed in~\cite{6477828}. In~\cite{7417719}, the authors solved a resource sharing problem using a novel interference-aware bipartite graph. A low complexity optimal packet scheduler was developed in~\cite{Kumar-2016} using an iterative process. Again, these works considered a single core network and most of them did not take interference into account.

In this paper, we aim to characterize the interference and coverage performance for the mMTC with data aggregation in a large-scale cellular network system, where the resource scheduling scheme is implemented at the aggregator side. We leverage stochastic geometry, as a powerful math tools, to provide tractable analytical results~\cite{Haenggi-2012,jeffrey-2011,Guo-2016}. We make the following major contributions in this paper:
\begin{itemize}
  \item We introduce a tractable two-phase network model for massive machine type communication, where MTDs first transmit to their serving aggregators (aggregation phase) and then the aggregated data is delivered to BSs (relaying phase). We develop a general analytical framework to investigate the signal-to-interference ratio for both transmission phases. Based on this, we obtain the approximated yet accurate results for the MTD success probability, average number of successful MTDs and probability of successful channel utilization, which are key metrics to evaluate the overall mMTC performance. Compared to time consuming Monte-Carlo simulations, our derived analytical expressions allow for fast computation.

  \item  By taking into account the limited resources, we include two resource scheduling schemes implemented at the aggregators in the system: i) random resource scheduling (RRS) scheme, where aggregators randomly allocate the limited resources to MTDs; and ii) channel-aware resource scheduling scheme (CRS) scheme, where aggregators allocate resources to the MTDs having better channel conditions. The implementation of CRS is generally more complex than the RRS since it requires the perfect channel state information. Our results show that, even though the CRS scheme has much better SIR performance than the RRS scheme at the aggregation phase, when the resources at the aggregation phase are very limited, the RRS scheme only performs slightly worse than the CRS scheme in terms of the overall performance.

  \item Based on our derived results, we present different trade-offs in the system and investigate the effect of system parameters on the mMTC performance. Our results show that the provision of more resources at the aggregation phase is not always beneficial to the mMTC performance and sometimes it can even degrade the MTD success probability and average number of successful MTDs such as when the resources for the relaying phase are restricted, or the path-loss exponent is small, or the aggregators are very dense.
\end{itemize}
The remainder of the paper is organized as follows. Section~\ref{sec:system} presents the detailed system model and assumptions. The considered performance metrics and their descriptions are presented in Section~\ref{sec:metric}. The analysis for the aggregation phase and relaying phase is presented in Section~\ref{sec:firstphase} and Section~\ref{sec:secondphase}, respectively. The analytical results for the metrics are summarized in Section~\ref{sec:summary:metric}. Numerical and simulation results to study the machine type communication are discussed in Section~\ref{sec:result}. Finally, conclusions are presented in Section~\ref{sec:conclusion}.
\section{System Model}\label{sec:system}
\subsection{Network Model}\label{subsec:system:network}
Consider a single-tier cellular network where the location of base stations is modeled as a homogeneous Poisson point process (HPPP), denoted as $\Phi_{B}$, with density $\BSdensity$ in $\mathbb{R}^2$. We assume that the cellular network is overlaid with data aggregators, which are also spatially distributed according to an independent HPPP, denoted as $\Phi_{a}$, with density $\adensity$. Each aggregator is assumed to have a serving zone that is a disk region centered at the aggregator with radius $R_s$. Inside each serving zone, there are multiple machine type devices located around the aggregator. Generally, MTDs are either static or have low mobility. However, since the frequency of data transmission for each MTD is low, we assume that the location of MTDs requiring data transmission is modeled as a Mat\'ern cluster point process and the multiple aggregators constitute the parent point process of this Mat\'ern cluster point process\redcom{\footnote{\redcom{Our proposed model is applicable to smart utility metering and industry automation use cases for mMTC over cellular~\cite{Dawy-2015}, where the authorized MTDs are either static or have low mobility and are served by their designated aggregators. Hence, it is appropriate to model their locations as a Mat\'ern cluster point process. Note that a different MTD deployment has been considered in~\cite{Malak-2016}, which is an appropriate model when MTDs are scattered around without any concentration in their locations. The analytical framework developed in this paper can be applied to the network model in~\cite{Malak-2016}.}}}. In this way, for each cluster formed by the aggregator, the location of its MTDs \redcom{to be served} follows a PPP, where each MTD is uniformly and independently distributed with the distance distribution between a MTD and its serving aggregator being $f(r_m)=\frac{2r_m}{R_s^2}$. Let $\bar{m}$ denote the average number of MTDs in each aggregator's serving zone. The density of MTDs is then given by $\lambda_{m}=\bar{m}\lambda_{a}$~\cite{Haenggi-2012}, and the instantaneous number of MTDs in each aggregator, denoted as $K$, follows the Poisson distribution with mean $\bar{m}$, i.e., $\Pr(K=k)=\frac{1}{k!}\bar{m}^k\exp(-\bar{m})$~\cite{Kumar-2016}. Fig.~\ref{fig_network} illustrates a snapshot of the considered network model.
\vspace{-2mm}
\subsection{Transmission Model}\label{subsec:system:transmission}
We consider the uplink transmission model for machine type communication and the data transmission for a MTD can be divided into two phases\redcom{\footnote{\redcom{Similar to~\cite{Malak-2016,Pratas-2015,6504002,7417719,Kumar-2016,Azari-2016}, we do not model the random access in the network and the MTDs considered in the aggregation phase can be viewed as the MTDs that have been granted access to the aggregators.}}}. In the first phase (called the aggregation phase), the MTD tries to transmit its data with a fixed payload size $D$ to its serving aggregator\redcom{\footnote{\redcom{For simplicity, we assume that each aggregator has no buffer and transmits all its aggregated data in one go.}}}.Note that in our work the aggregator not only works as a relay, but can also implement the resource scheduling. We assume that, due to the spectrum limitation, there are $N$ orthogonal channels available that are scheduled for data transmission from MTDs to aggregators, and this channel set is denoted as $\mathcal{N}$. The limited channel resources $\mathcal{N}$ will be used among all aggregators across the entire network. Since at most one MTD is allowed to occupy one orthogonal channel within an aggregator's serving zone, there is only inter-cluster interference (i.e., the interference from MTDs in the serving zones of other aggregator) and no intra-cluster interference (i.e., the interference from MTDs within the serving zone of the same aggregator) exists.

\redcom{In this work, we consider a random resource scheduling (RRS) scheme, where any channel belonging to $\mathcal{N}$ will be independently and randomly allocated to any associated MTD with the same probability by the aggregator. As a benchmark, we also consider a channel-aware resource scheduling (CRS) scheme. Under this scheme, the MTDs with better fading (equivalently, better signal-to-noise ratio) will be preferentially assigned with the available channel resources~\cite{Dhillon-2014}. Note that both aggregators and BSs are assumed to have perfect channel state information (CSI) of their serving nodes. The RRS scheme does not need the CSI for the resource scheduling while the CRS scheme strongly relies on the CSI (i.e., the fading) for resource scheduling. We assume perfect CSI here in order to obtain benchmark results. The performance with imperfect CSI and practical schemes for obtaining the CSI are outside the scope of the present work.}
 \begin{figure}[t]
        \centering
        \vspace{-3mm}
        \includegraphics[width=0.6  \textwidth]{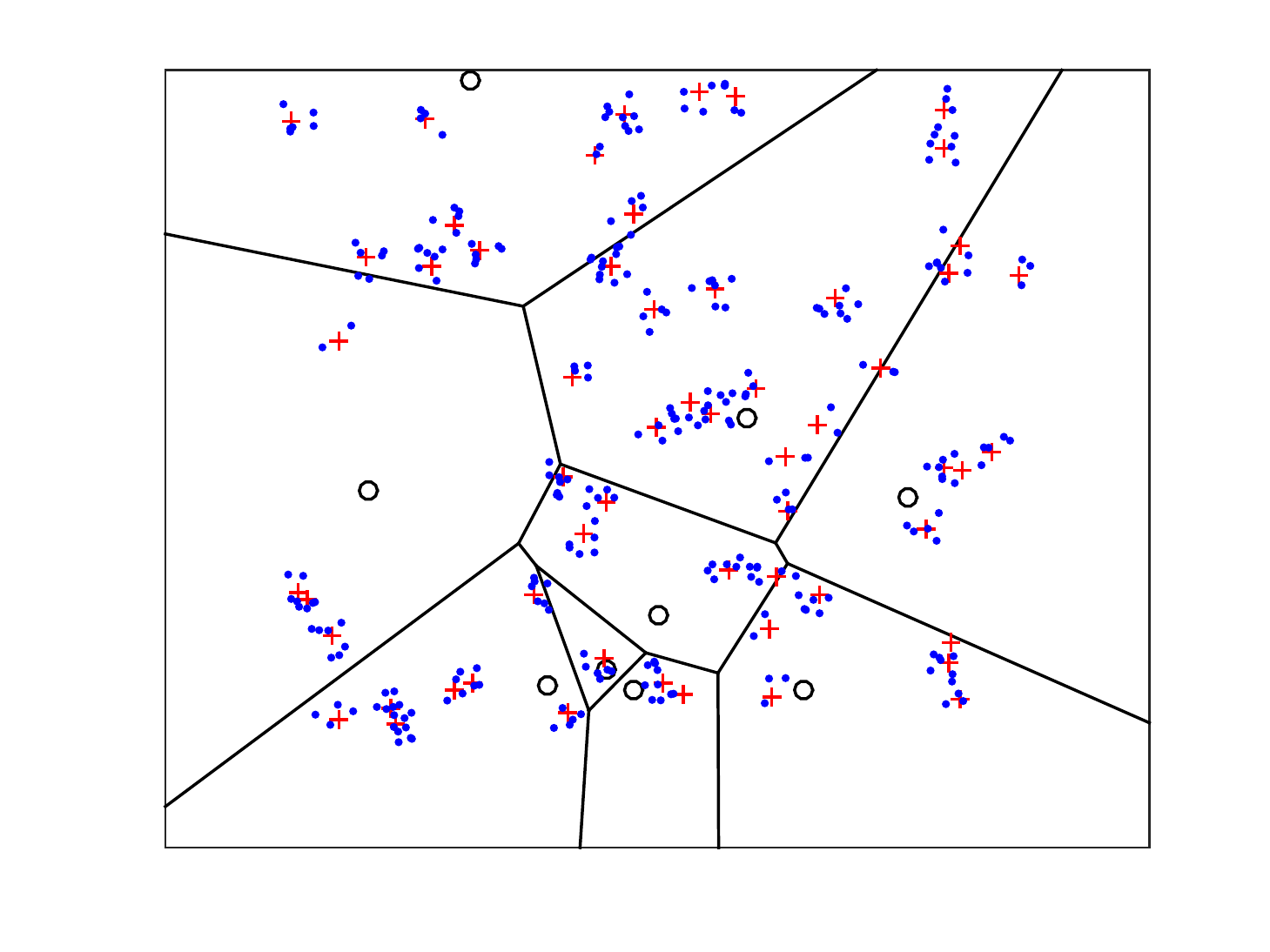}
        \vspace{-3mm}\caption{Illustration of the cellular network with massive machine type communications (${\color{black}\circ}=$ BS, ${\color{red}+}=$ aggregator, ${\color{blue}\cdot}=$ MTD).}
        \label{fig_network}
\end{figure}

 After aggregating data from the MTDs, each aggregator then \redcom{acts as an ordinary cellular user and} transmits the information to its closest BS in the second phase (called the relaying phase). In this way, the aggregators are partitioned by the Voronoi cell formed by the BSs, as shown in Fig.~\ref{fig_network}. Hence, the distance distribution between an aggregator and its associated BS is given by $f(r_a)=2\pi \BSdensity r_a\exp(-2\pi\BSdensity r_a^2)$~\cite{jeffrey-2011}. We further assume that the aggregator's density is far greater than the density of BSs such that each BS has at least one aggregator associated to it. For the multiple aggregators covered by the same BS, they access the BS using a round-robin fashion (i.e., the total available uplink resources are equally partitioned among these associated aggregators)~\cite{Singh-2015}. \redcom{Note that the relaying phase always begins after the aggregation phase is completed. We also assume that these phases occur synchronously in all aggregators.}

\subsection{Channel Model}
The path-loss plus block fading channel model is employed in this work. \redcom{The block fading assumption means that the fading is unchanged within one realization, but it changes independently from one realization to another realization}. The instantaneous received power at the receiver side is mathematically given by $p_{t}g r^{-\alpha}$, where $p_t$ is the transmit power from a transmitter, $g$ denotes the fading power gain on the transmission link, $r$ is the distance between the receiver and transmitter and $\alpha$ represents the path-loss exponent (i.e., $2<\alpha\leq6$). Throughout this work, we denote the fading experienced at the desired link for the aggregation and relaying phases as $h$ and $h'$, which are assumed to be the independently and identically distributed (i.i.d.) Nakagami-$m$ fading with integer $m_1$ and $m_2$, respectively. \redcom{Since the interfering nodes are more likely to be far away from the typical node, the fading on the interfering link is more likely to be severe. Hence we assume the fading on the interfering link to be i.i.d. Rayleigh fading, denoted as $g$.}

All the MTDs and aggregators are assumed to use full inversion power control with receiver sensitivity $\rho$~\cite{Malak-2016,Mohammad-2016}. For example, for a given MTD which is a distance $r_m$ away from its serving aggregator, its transmit power will be $\rho r_m^{\alpha}$. As we are considering an interference-limited scenario, the value of $\rho$ does not impact the performance of the network. Without loss of generality, we set $\rho$ to be unity in this work.

\section{Key Performance Metrics}\label{sec:metric}
Using the system model described above, we aim to investigate the network performance in terms of three metrics, namely the MTD success probability, average number of successful MTDs and probability of successful channel utilization. Their definitions, along with their mathematical formulations are described in this section below.

\subsection{MTD Success Probability}
This metric is the probability that the data sent by a MTD can be successfully received at the BS side after going through the aggregator. In order to guarantee that a typical MTD's transmission is successful, the following three conditions must be met:
 \begin{itemize}
  \item \textit{Condition 1: The typical MTD is not dropped by its serving aggregator.} Since the number of channels is limited for the aggregation phase, the MTD may not be assigned a channel such that its data is dropped.
  \item \textit{Condition 2: Given that the typical MTD is not dropped, it does not experience channel outage.} Due to the possible concurrent transmission of MTDs in other aggregators' serving zone on the same channel, the typical MTD will receive their generated inter-cluster interference. To ensure that the data can be successfully decoded at the aggregator, the signal-to-interference ratio at the aggregator (denoted as $\textsf{SIR}_1$) has to be greater than a certain threshold $\gamma_1$. Otherwise, this typical MTD is in channel outage.
  \item \textit{Condition 3: The serving aggregator is not in channel outage.} Similarly, when the aggregator transmits packet to its associated BS, it will receive the inter-cell interference from other aggregators. The SIR at the BS (denoted as $\textsf{SIR}_2$) has to be higher than a threshold $\gamma_2$.
 \end{itemize}
The MTD success probability can thus be generally written as
 \begin{align}\label{eq:general:MTDoutage}
 \psuc=\mathbb{E}\left\{\textbf{1}(\textrm{MTD is selected})\textbf{1}(\textsf{SIR}_{1}>\gamma_1)\textbf{1}(\textsf{SIR}_{2}>\gamma_2)\right\},
 \end{align}
\noindent where $\mathbb{E}\left\{\cdot\right\}$ is the expectation operator and $\textbf{1}(\cdot)$ is the indicator function.

The aggregation and relaying phases are dependent, \redcom{i.e., the aggregated data transmitted by an aggregator strongly relies on the instantaneous number of MTDs associated to the aggregator and the SIR performance on each channel}. For analytical tractability, we assume that each phase is independent~\cite{Malak-2016} and the accuracy of this assumption will be verified in Section~\ref{sec:result}. Thus, we decouple each factor in~\eqref{eq:general:MTDoutage} and rewrite the MTD success probability as\footnote{Note that for the CRS scheme, the formulation of $\psuc$ is slightly different from~\eqref{eq:summary_channel1}, because the first and second conditions are correlated. Its exact expression will be shown in Section~\ref{sec:summary:metric}.}
\begin{align}\label{eq:summary_channel1}
\psuc\approx&\,\apndrop\times\apouta\times\apoutb,
\end{align}
\noindent where the average non-drop probability $\apndrop$ is the average probability the a MTD is not dropped which corresponds to the first condition, $\apouta$ and $\apoutb$ denote the average channel success probability for the aggregation and relaying phases, which correspond to the second and third conditions, respectively.

 \subsection{Average Number of Successful MTDs}
 This metric evaluates, for a typical aggregator, the average number of its \redcom{served} MTDs whose data can be successfully received by the BS. It relies on two factors: i) the number of \redcom{served} MTDs whose data can be successfully decoded by the aggregator at the aggregation phase (equivalently, the number of active channels, denoted as $K_1$); and ii) whether the data aggregated at the aggregator can be successfully decoded by its associated BS. We can formally write this metric as
 \begin{align}\label{eq:generalnumbersuc}
 \bar{K}_{\textrm{suc}}&=\mathbb{E}\left\{K_1 \textbf{1}(\textsf{SIR}_{2}>\gamma_2)\right\}\nonumber\\
 &=\sum_{k_1=1}^{N}k_1\Pr(K_1=k_1)\poutb(k_1),
 \end{align}
 \noindent where $\Pr(K_1=k_1)$ is the probability mass function (PMF) of the number of active channels and $\poutb(k_1)$ is relaying phase's conditional channel success probability given $k_1$ active channels.

\textit{\underline{Rationale:}} The MTD success probability is a metric to assess the performance of a single MTD, while the average number of successful MTDs is a metric to assess the impact of aggregators. We have adapted their definitions from~\cite{7248779}. Both metrics are dependent, to a certain extent, on the available channel resources. In order to explicitly examine the efficiency of channel utilization, we consider the following metric, which evaluates the performance from the perspective of channel resources.
 \subsection{Probability of Successful Channel Utilization}
It is the average probability that a certain channel is occupied by a MTD and this MTD's data can finally be decoded by the BS. It can be written as
\ifCLASSOPTIONpeerreview
\begin{align}
 \achannel=\mathbb{E}\left\{\textbf{1}(\textrm{channel is occupied})\textbf{1}(\textsf{SIR}_{1}>\gamma_1)\textbf{1}(\textsf{SIR}_{2}>\gamma_2)\right\}.
\end{align}
\else
\begin{align}
 \achannel\!=\!\mathbb{E}\left\{\textbf{1}\!(\textrm{channel is occupied})\textbf{1}\!(\textsf{SIR}_{1}\!>\!\gamma_1)\textbf{1}\!(\textsf{SIR}_{2}\!>\!\gamma_2)\!\right\}.
\end{align}
\fi

Similar to the MTD success probability, by assuming the independence between the aggregation and relaying phases, we can have the probability of successful channel utilization as
\begin{align}\label{eq:summary_ulti}
 \achannel\approx &\,\bar{p}_{O}\times\apouta\times\apoutb,
\end{align}
\noindent where the average channel occupation probability $\bar{p}_{O}$ is the average probability that a channel is occupied by a MTD.

In the following two sections, we discuss the key elements determining these performance metrics grouped according to the two phases.
 \section{Aggregation Phase}\label{sec:firstphase}
 In this section, we investigate the channel utilization performance and the channel success probability of the aggregation phase.
 \subsection{Resource Scheduling}
 Different from~\cite{Malak-2016} where all the MTDs in the network region are assured to be assigned with an equal channel resource, in our work, aggregators can schedule the limited resources to MTDs. Under the RRS scheme, each channel is allocated to the MTD with the same probability. Under the CRS scheme, we assume that an aggregator with $K$ MTDs has the knowledge of their fading gains. Let $\left\{h_{(1)},...,h_{(i)},...h_{(K)}\right\}$ denote the decreasing ordered fading gains, where $h_{(i-1)}>h_{(i)}$. If $K>N$, the aggregator will pick $N$ \redcom{MTDs} with better channel gains (i.e., the set $\left\{h_{(1)},...h_{(N)}\right\}$) and then assign the channel set $\mathcal{N}$ to the corresponding $N$ MTDs\footnote{Note that we assume the assignment is random. In other words, the MTD with the best channel gain will not always be assigned with the first (or last) channel.}. If $K<N$, the aggregator will randomly pick $K$ channels and allocate these channels to $K$ MTDs. In general, this scheme can guarantee that data from MTDs are more likely to be successfully decoded by the aggregator.
\subsection{Channel Utilization Performance}
As mentioned in Section~\ref{subsec:system:network}, the number of MTDs requiring data transmission $K$ within an aggregator's serving zone is a Poisson random variable, while the number of channels $N$ is fixed. Under the case where the number of MTDs is less than the available resources, a certain channel may not be occupied by a MTD because of the excessive resources compared to the number of MTDs. Similarly, when the number of MTDs is greater than the available resources, a certain MTD may not be assigned a channel due to the insufficient resources.

Using the probability theory, we present the results of $\bar{p}_{O}$ and $\apndrop$ in the following lemmas. Note that these lemmas hold for both the RRS and CRS schemes.
\begin{lemma}
Based on the system model in Section~\ref{sec:system}, the average channel occupation probability is
 \begin{align}\label{eq:occupyprob}
 \bar{p}_O=1-\frac{\Gamma\left[1+N,\bar{m}\right]}{\Gamma\left[N\right]}-\frac{\exp(-\bar{m})\bar{m}^N-\bar{m}\Gamma[N,\bar{m}]}{\Gamma[1+N]},
 \end{align}
 \noindent where $\Gamma[\cdot]$ and $\Gamma[\cdot,\cdot]$ are complete gamma function and the incomplete upper gamma functions, respectively.
\end{lemma}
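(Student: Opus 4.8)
The plan is to condition on the instantaneous number of MTDs $K$ in the typical aggregator's serving zone and then average the conditional occupation probability of a tagged channel over the Poisson law of $K$. First I would fix $K=k$ and argue that, by the symmetry of the channel assignment, the probability that one particular channel is occupied depends only on $k$ and $N$ and equals $\min(k,N)/N$: when $k\le N$ the $k$ occupied channels form a uniformly random $k$-subset of $\mathcal{N}$, so each channel lies in it with probability $k/N$, whereas when $k>N$ all $N$ channels are used and the probability is $1$. I would emphasise that this conditional probability is identical for the RRS and CRS schemes, because under CRS the scheme only decides \emph{which} MTDs are served, while the labelling of the actual channel handed to a served MTD is still uniform (cf.\ the footnote in Section~\ref{sec:firstphase}); hence the marginal occupation statistics of a tagged channel coincide with those of RRS. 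This is precisely why the lemma can be stated for both schemes at once.

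Next I would de-condition using $\Pr(K=k)=\frac{\bar{m}^{k}e^{-\bar{m}}}{k!}$, giving
\[
\bar{p}_O=\frac{1}{N}\sum_{k=0}^{\infty}\min(k,N)\,\frac{\bar{m}^{k}e^{-\bar{m}}}{k!}.
\]
I would then split the sum at $k=N$ into a ``mean'' part $\frac{1}{N}\sum_{k=0}^{N}k\,\Pr(K=k)$ and a ``tail'' part $\sum_{k=N+1}^{\infty}\Pr(K=k)$. The tail is one minus the Poisson CDF, which I would rewrite through the standard identity $\sum_{k=0}^{n-1}\frac{\bar{m}^{k}e^{-\bar{m}}}{k!}=\frac{\Gamma[n,\bar{m}]}{\Gamma[n]}$ (the regularized upper incomplete gamma function), so that the tail equals $1-\frac{\Gamma[1+N,\bar{m}]}{\Gamma[1+N]}$. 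For the mean part I would reindex $k\mapsto k-1$ to pull out a factor $\bar{m}$ and again recognise a truncated Poisson sum, obtaining $\frac{\bar{m}\,\Gamma[N,\bar{m}]}{\Gamma[1+N]}$ after using $N\Gamma[N]=\Gamma[1+N]$.

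Finally I would add the two contributions and apply the recurrence $\Gamma[1+N,\bar{m}]=N\,\Gamma[N,\bar{m}]+\bar{m}^{N}e^{-\bar{m}}$, which lets me rewrite $N\Gamma[N,\bar{m}]/\Gamma[1+N]=\Gamma[N,\bar{m}]/\Gamma[N]$ and collect the remaining terms into the closed form quoted in the lemma. The manipulations with the incomplete gamma function are routine; the step that needs the most care is the modelling one rather than the algebra, namely justifying the conditional occupation probability $\min(k,N)/N$ and, in particular, verifying that it is scheme-independent so that a single expression covers both RRS and CRS. As a bookkeeping check I would specialise to $N=1$, where $\bar{p}_O$ must reduce to $\Pr(K\ge 1)=1-e^{-\bar{m}}$; confirming this limit validates the gamma-function accounting.
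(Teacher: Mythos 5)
Your proof is correct and is essentially the paper's own argument (Appendix A): condition on $K$, note that the conditional occupation probability $\min\{K,N\}/N$ is scheme-independent, and average over the Poisson law by splitting the sum at $k=N$, which gives $\bar{p}_O=1-\frac{\Gamma[1+N,\bar{m}]}{\Gamma[1+N]}+\frac{\bar{m}\,\Gamma[N,\bar{m}]}{\Gamma[1+N]}$ exactly as the paper does. One caveat on your final step: applying the recurrence $\Gamma[1+N,\bar{m}]=N\Gamma[N,\bar{m}]+\bar{m}^{N}e^{-\bar{m}}$ lands on $\bar{p}_O=1-\frac{\Gamma[N,\bar{m}]}{\Gamma[N]}-\frac{e^{-\bar{m}}\bar{m}^{N}-\bar{m}\,\Gamma[N,\bar{m}]}{\Gamma[1+N]}$, i.e.\ the first numerator must be $\Gamma[N,\bar{m}]$ rather than the $\Gamma[1+N,\bar{m}]$ printed in the lemma --- the printed statement contains a typo (it fails your own $N=1$ sanity check, giving $1-(1+\bar{m})e^{-\bar{m}}$ instead of $1-e^{-\bar{m}}$), so your derivation in fact recovers the intended, correct expression, the one the paper's appendix also obtains in an equivalent form.
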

\begin{lemma}
Based on the system model in Section~\ref{sec:system}, the average non-drop probability of a MTD is
\ifCLASSOPTIONpeerreview
\begin{align}\label{eq:nondropprob}
\apndrop=\frac{\Gamma[1+N,\bar{m}]}{\Gamma[1+N]}+\exp(-\bar{m})\bar{m}^{1+N}N\frac{\,_2F_2\left[\left\{1,1+N\right\},\left\{2+N,2+N\right\},\bar{m}\right]}{(N+1)(N+1)!},
\end{align}
\else
\begin{align}\label{eq:nondropprob}
&\apndrop=\frac{\Gamma[1+N,\bar{m}]}{\Gamma[1+N]}\nonumber\\
&+\exp(-\bar{m})\bar{m}^{1+N}N\frac{\,_2F_2\left[\left\{1,1+N\right\},\left\{2+N,2+N\right\},\bar{m}\right]}{(N+1)(N+1)!},
\end{align}
\fi
\noindent where $_2F_2\left[\left\{\cdot,\cdot\right\},\left\{\cdot,\cdot\right\},\cdot\right]$ is the generalized hypergeometric function.
\end{lemma}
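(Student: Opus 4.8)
The plan is to condition on the number of MTDs $K$ contending in a typical aggregator's serving zone, evaluate the non-drop probability of a tagged MTD given $K=k$, and then average over $K$. Since Section~\ref{subsec:system:network} establishes $K\sim\mathrm{Poisson}(\bar{m})$, the whole quantity collapses to a single series $\apndrop=\sum_{k=0}^{\infty}\Pr(K=k)\,g(k)$ for an explicit function $g$, and the real work is to sum this series in closed form.

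First I would pin down $g(k)$. With $k$ MTDs competing for $N$ orthogonal channels, if $k\le N$ there is no shortage and every MTD is served, so $g(k)=1$; if $k>N$ only $N$ of the $k$ MTDs are scheduled and the tagged MTD survives with probability $N/k$, hence $g(k)=N/k$. The key point that makes the lemma valid for \emph{both} RRS and CRS is that this $N/k$ does not depend on the scheduling rule: under RRS the $N$ winners are drawn uniformly, while under CRS they are the $N$ largest of $k$ i.i.d. fading gains, but in either case the MTDs are statistically exchangeable, so the tagged one is equally likely to occupy any rank and is therefore selected with probability $N/k$.

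Substituting $g$ and splitting the sum at $k=N$ gives $\apndrop=\sum_{k=0}^{N}\Pr(K=k)+N\sum_{k=N+1}^{\infty}\frac{1}{k}\Pr(K=k)$. The first piece is just the Poisson CDF, which equals the regularized upper incomplete gamma $\Gamma[1+N,\bar{m}]/\Gamma[1+N]$; this reproduces the leading term of the claim. The tail carries all of the difficulty.

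The main obstacle is closing the tail $Ne^{-\bar{m}}\sum_{k=N+1}^{\infty}\frac{\bar{m}^{k}}{k\,k!}$ into a named special function. I would shift the index via $k=N+1+n$, factor out $\bar{m}^{N+1}$, and then rewrite the summand with Pochhammer symbols using $(N+1+n)!=(N+1)!\,(N+2)_n$ together with $(N+1)_n/(N+2)_n=(N+1)/(N+1+n)$, which turns the series into $\frac{1}{(N+1)(N+1)!}\sum_{n\ge0}\frac{(1)_n(N+1)_n}{[(N+2)_n]^2}\frac{\bar{m}^{n}}{n!}$ (invoking $(1)_n=n!$). This is precisely $\frac{1}{(N+1)(N+1)!}\,_2F_2[\{1,N+1\},\{N+2,N+2\},\bar{m}]$, and restoring the prefactors $Ne^{-\bar{m}}\bar{m}^{N+1}$ yields the second term of the statement. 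The delicate step is the Pochhammer bookkeeping, ensuring the two $(N+2)_n$ factors and the residual $1/(N+1)$ align exactly with the stated argument lists of the $\,_2F_2$.
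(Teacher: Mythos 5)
Your proof is correct and follows essentially the same route as the paper's Appendix A: condition on the Poisson number $K$ of contending MTDs, observe that the conditional non-drop probability is $N/\max\{K,N\}$ under both RRS and CRS (by exchangeability of the i.i.d.\ fading gains), average over $K$, and identify the head of the resulting series with the regularized incomplete gamma ratio and the tail with the stated $\,_2F_2$ term. The only difference is that you carry out the Pochhammer-symbol reduction of the tail explicitly -- a step the paper leaves implicit -- and that bookkeeping, including the identities $(N{+}1{+}n)!=(N{+}1)!\,(N{+}2)_n$ and $(N{+}1)_n/(N{+}2)_n=(N{+}1)/(N{+}1{+}n)$, is correct.
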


\emph{Proof:} See Appendix A.
\begin{remark}
Both of these two quantities are not impacted by the density of aggregators and BSs, and they rely only on the number of channels $N$ and the average number of MTDs per aggregation $\bar{m}$. We find that increasing $N$ can increase $\apndrop$ (equivalently, less MTDs are likely to be dropped) while it deteriorates the channel occupation performance (i.e., channels are not efficiently utilized). In terms of the impact of $N$, we will examine it in Section~\ref{sec:result:N}.
\end{remark}

 \subsection{Average Channel Success Probability}
According to Section~\ref{subsec:system:transmission}, the channel set $\mathcal{N}$ is used among all aggregators. For a typical link, it will receive the inter-cluster interference from other MTDs occupying the same channel. Based on the \redcom{independent thinning property and} displacement theorem in stochastic geometry~\cite{Haenggi-2012}, the location of the interfering MTDs that generates the interference on a \redcom{typical} channel for a typical aggregator is in fact the HPPP with density $\bar{p}_O\adensity$, denoted as $\Phi_{\textrm{MTD}}^{\redcom{\textrm{served}}}$.

 To calculate the inter-cluster interference, we condition on having a typical aggregator located at the origin. By Slivnyak's theorem~\cite{Haenggi-2012}, conditioning a node at a certain location does not change the distribution of the rest of the process. Using the stochastic geometry, we obtain the aggregation phase's channel success probability under the RRS as follows.
\begin{theorem}\label{theo:channeloutage1}
Based on the system model in Section~\ref{sec:system}, under the random resource scheduling scheme, the aggregation phase's average channel success probability experienced at a typical aggregator for a certain channel is
\begin{align}\label{eq:result_channelRRS}
\apouta^{r}=\sum\limits_{t=0}\limits^{m_1-1}\frac{(-s)^{t}}{t!}\left.\frac{d^t}{d s^t}\mathcal{M}_{I_{1}}(s)\right|_{s=m_1\gamma_1},
\end{align}
\noindent where $\mathcal{M}_{I_{1}}(s)=\exp\left(-\bar{p}_O\adensity\pi\frac{R_s^2}{2}\Gamma\!\left[1+\frac{2}{\alpha}\right]\Gamma\!\left[1-\frac{2}{\alpha}\right]s^{\frac{2}{\alpha}}\right)$ is the moment generating function (MGF) of the inter-cluster interference $I_1$ and $m_1$ is the fading parameter for the desired link.
\end{theorem}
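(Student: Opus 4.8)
The plan is to compute the channel success probability $\apouta^{r} = \Pr(\textsf{SIR}_1 > \gamma_1)$ by conditioning on a typical aggregator at the origin (justified by Slivnyak's theorem) and exploiting the fact that the desired link experiences Nakagami-$m$ fading with integer parameter $m_1$. Since the transmitter employs full channel inversion power control with $\rho=1$, the received signal power from the served MTD is simply $h$, so that $\textsf{SIR}_1 = h / I_1$ where $I_1$ is the aggregate inter-cluster interference on the typical channel. First I would write $\Pr(h > \gamma_1 I_1)$ and use the standard trick for integer-$m$ Nakagami fading: the complementary CDF of a normalized Gamma$(m_1,m_1)$ random variable admits the finite-sum form $\Pr(h>x) = \sum_{t=0}^{m_1-1} \frac{(m_1 x)^t}{t!} e^{-m_1 x}$. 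Substituting $x=\gamma_1 I_1$ and taking the expectation over $I_1$ turns each term into a derivative of the Laplace transform (equivalently the MGF) of $I_1$ evaluated at $s=m_1\gamma_1$, which yields exactly the form in~\eqref{eq:result_channelRRS}. This reduces the whole theorem to identifying $\mathcal{M}_{I_{1}}(s)=\mathbb{E}\{e^{-sI_1}\}$.

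Next I would establish the characterization of the interfering field. By the independent thinning and displacement arguments already invoked in the paper, the MTDs occupying the typical channel across all other clusters form a homogeneous PPP $\Phi_{\textrm{MTD}}^{\textrm{served}}$ of density $\bar{p}_O\adensity$; each such interferer transmits with its own inversion power $r_m^{\alpha}$ (distance to its own serving aggregator) and reaches the typical aggregator at distance $r$ through Rayleigh fading $g$. Hence the per-interferer received power is $g\, r_m^{\alpha} r^{-\alpha}$ and $I_1 = \sum_{i} g_i r_{m,i}^{\alpha} r_i^{-\alpha}$. I would then apply the probability generating functional (PGFL) of the PPP to obtain
\begin{align}\label{eq:pgfl}
\mathcal{M}_{I_{1}}(s)=\exp\!\left(-\bar{p}_O\adensity\!\int_{\mathbb{R}^2}\!\Big(1-\mathbb{E}_{g,r_m}\big\{e^{-s g\, r_m^{\alpha} r^{-\alpha}}\big\}\Big)\,\mathrm{d}r\right).
\end{align}
Because $g$ is exponential (Rayleigh power), the inner expectation over $g$ gives the closed form $1/(1+s\,r_m^{\alpha}r^{-\alpha})$, leaving a tractable double integral.

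The remaining work is the integral evaluation. Switching to polar coordinates, the angular integral contributes a factor $2\pi$ and the radial integral over $r$ is of the classic form $\int_0^\infty (1-\frac{1}{1+s\,a\,r^{-\alpha}})r\,\mathrm{d}r$, which evaluates via a standard substitution (e.g. $u=r^{-\alpha}$ or a Beta-function identity) to a constant times $(s\,a)^{2/\alpha}$ with the Gamma factor $\Gamma[1+\tfrac{2}{\alpha}]\Gamma[1-\tfrac{2}{\alpha}]$, where $a=r_m^{\alpha}$. Finally I would take the expectation over the interferer's own link distance $r_m$ using the given density $f(r_m)=\frac{2r_m}{R_s^2}$ on $[0,R_s]$; since the $r$-integral produced a factor $(r_m^{\alpha})^{2/\alpha}=r_m^{2}$, the $r_m$-expectation reduces to $\mathbb{E}\{r_m^2\}=\int_0^{R_s} r_m^2\,\frac{2r_m}{R_s^2}\,\mathrm{d}r_m=\frac{R_s^2}{2}$, exactly producing the $\pi\frac{R_s^2}{2}$ prefactor in the stated MGF.

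I expect the main obstacle to be bookkeeping the two distance scales correctly and justifying the interferer geometry, rather than any single hard integral. In particular, one must be careful that each interfering MTD's inversion power depends on \emph{its own} (random) distance $r_m$ to its parent aggregator, while the path-loss to the typical receiver depends on the independent distance $r$; conflating these or mishandling their independence would give a wrong prefactor. A subtle point to check is that the convenient decoupling of the $r$- and $r_m$-integrals is exactly what makes $\mathbb{E}\{r_m^2\}=R_s^2/2$ appear, and that the convergence of the radial integral requires $2<\alpha$ (for the $\Gamma[1-\tfrac{2}{\alpha}]$ factor to be finite), consistent with the assumed range $2<\alpha\le 6$. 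Once the MGF is in hand, plugging it into the finite-sum derivative formula is routine and completes the proof.
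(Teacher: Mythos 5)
Your proposal is correct and follows essentially the same route as the paper's Appendix B: the Nakagami-$m_1$ finite-sum/MGF-derivative reduction, the HPPP interferer field of density $\bar{p}_O\adensity$ with i.i.d.\ marks $g\,r_m^{\alpha}$, and the PGFL evaluation yielding the factor $\pi\frac{R_s^2}{2}\Gamma\!\left[1+\frac{2}{\alpha}\right]\Gamma\!\left[1-\frac{2}{\alpha}\right]s^{2/\alpha}$. The only difference is one of detail: the paper invokes the standard Campbell/PGFL closed form in a single step (with $\mathbb{E}\{g^{2/\alpha}\}=\Gamma[1+\frac{2}{\alpha}]$ and $\mathbb{E}\{r_s^2\}=\frac{R_s^2}{2}$), whereas you explicitly carry out the exponential-fading integral and the radial integral, which is an equivalent computation.
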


\textit{Proof:} See Appendix B.
\redcom{\begin{remark}
From~\eqref{eq:result_channelRRS}, we can see that $\apouta^{r}$ is a decreasing function of $\bar{p}_O$, $\adensity$ and $R_s$, since these terms appear in the exponent in $\mathcal{M}_{I_{1}}(s)$. This implies that increasing the density of aggregators $\adensity$ or the radius of the serving zone $R_s$ or the average channel occupation probability $\bar{p}_O$ (equivalently either increasing $\bar{m}$ or decreasing $N$) can degrade the average channel success probability of the aggregation phase. Note that~\eqref{eq:result_channelRRS} is a well-known result in the stochastic geometry literature when the location of interfering nodes follows a HPPP.
\end{remark}}

Compared to the RRS, the analysis of the channel success probability for the CRS is more complex because its performance relies on the number of MTDs requiring data transmission. The result is presented in the following proposition.
\ifCLASSOPTIONpeerreview
 \begin{proposition}\label{theo:channeloutage1p}
 Based on the system model in Section~\ref{sec:system}, under the channel-aware resource scheduling scheme, the aggregation phase's average channel success probability experienced at a typical aggregator for a certain channel is
 \begin{align}
 \apouta^{c}=&\apouta^{r}\frac{\exp(\bar{m})\Gamma\![1+N,\bar{m}]-\Gamma\![1+N]}{\left(\exp(\bar{m})-1\right)\Gamma\![1+N]}+\sum_{k=N+1}^{\infty}\pouta^{c}(k)\frac{\Pr(K=k)}{1-\Pr(K=0)}, \label{eq:conprob1pppp}\\
\pouta^{c}(K)\!\approx&\!\frac{\sum\limits_{i=1}^{N}\!\left(\frac{1}{2}\!-\!\frac{1}{\pi}\int_0^{\infty}\!\!\frac{1}{w}\textrm{Im}\!\left\{\!\exp\!\left(\!-\bar{p}_O\adensity\pi\frac{R_s^2}{2}\Gamma\!\left[\frac{\alpha\!+\!2}{\alpha}\right]\Gamma\!\left[\frac{\alpha\!-\!2}{\alpha}\right]
 \!\left(-\ii w\right)^{\!\frac{2}{\alpha}}\right)\!\exp\!\left(\!-\ii w\frac{\mathbb{E}_{h_{(i),k}}\left\{h_{(i),k}\right\}}{\gamma_1}\!\right)\!\!\right\}\textup{d}w\!\!\right)}{N},\label{eq:conprob1p}
 \end{align}
 \noindent where $\pouta^{c}(k)$ is the conditional channel success probability of the aggregation phase which is conditioned on the number of MTDs requiring data transmission $k$, $\apouta^{r}$ is presented in Theorem~\ref{theo:channeloutage1}, $h_{(i),k}$ is the $i$-th best fading gain given $k$ MTDs within the typical aggregator's serving zone and $\mathbb{E}_{h_{(i),k}}\!\left\{h_{(i),k}\right\}$ is its corresponding mean which can be obtained using the formula in~\cite[eq.(2.2)]{Gupta-1960} or numerical evaluation via Mathematica.
 \end{proposition}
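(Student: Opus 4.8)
The plan is to condition on the instantaneous number of MTDs $K$ requiring service in the typical aggregator's zone and to split the analysis at the threshold $K=N$, since the scheduling behaves qualitatively differently in the two regimes. By the law of total probability, conditioning on $K\ge 1$ (the averaging used in the statement), I would write $\apouta^{c}=\sum_{k=1}^{\infty}\pouta^{c}(k)\frac{\Pr(K=k)}{1-\Pr(K=0)}$ and evaluate the partial sums over $1\le k\le N$ and $k>N$ separately. The first sum collapses onto $\apouta^{r}$ times the factor $\sum_{k=1}^{N}\frac{\Pr(K=k)}{1-\Pr(K=0)}$; using the integer-order identity $\Gamma[1+N,\bar{m}]=N!\,e^{-\bar{m}}\sum_{k=0}^{N}\bar{m}^{k}/k!$ together with $\Pr(K=k)=e^{-\bar{m}}\bar{m}^{k}/k!$ and $1-\Pr(K=0)=1-e^{-\bar{m}}$, this factor simplifies (after multiplying numerator and denominator by $e^{\bar{m}}$) to exactly the coefficient $\frac{e^{\bar{m}}\Gamma[1+N,\bar{m}]-\Gamma[1+N]}{(e^{\bar{m}}-1)\Gamma[1+N]}$ appearing in~\eqref{eq:conprob1pppp}. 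This step is purely gamma-function bookkeeping.

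For the regime $1\le k\le N$, every MTD is granted a channel, so the fading of whichever MTD occupies the tagged channel is an \emph{unordered} Nakagami-$m_1$ variable, statistically identical to the RRS case. The interferers on that channel remain the active MTDs of the other clusters, i.e.\ the thinned HPPP of density $\bar{p}_O\adensity$ used in Theorem~\ref{theo:channeloutage1}, so $\textsf{SIR}_1=h/I_1$ has precisely the distribution analysed there. Hence $\pouta^{c}(k)=\apouta^{r}$ for all $k\le N$, which justifies pulling $\apouta^{r}$ out of the first sum.

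The regime $k>N$ is where the real work lies. Now the aggregator retains only the top-$N$ fading gains, so the MTD on the tagged channel carries one of the order statistics $h_{(1),k},\dots,h_{(N),k}$; by the random-assignment footnote each is equally likely, giving the uniform average $\frac{1}{N}\sum_{i=1}^{N}$ over $i$. The difficulty is that order statistics of Gamma (Nakagami) variables are not themselves Gamma, so the clean ``differentiate-the-MGF'' CCDF formula of Theorem~\ref{theo:channeloutage1} no longer applies. My approach is to (i) replace the random gain $h_{(i),k}$ by its mean $\mathbb{E}_{h_{(i),k}}\{h_{(i),k}\}$, computed from the order-statistic moment formula of~\cite[eq.(2.2)]{Gupta-1960}; this decouples the fading from the interference and reduces the event $\{\textsf{SIR}_1>\gamma_1\}$ to the deterministic-threshold event $\{I_1<\mathbb{E}_{h_{(i),k}}\{h_{(i),k}\}/\gamma_1\}$. (ii) Since $I_1$ is stable-type interference with no closed-form CDF, I would recover $\Pr(I_1<x)$ by Gil--Pelaez inversion of its characteristic function $\phi_{I_1}(w)=\mathbb{E}\{e^{\ii w I_1}\}=\mathcal{M}_{I_{1}}(-\ii w)$, i.e.\ by substituting $s=-\ii w$ into the MGF of Theorem~\ref{theo:channeloutage1}. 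Writing $\Pr(I_1<x)=\frac{1}{2}-\frac{1}{\pi}\int_0^{\infty}\frac{1}{w}\,\textrm{Im}\{\phi_{I_1}(w)e^{-\ii w x}\}\,\textup{d}w$ with $x=\mathbb{E}_{h_{(i),k}}\{h_{(i),k}\}/\gamma_1$ and averaging over $i$ yields~\eqref{eq:conprob1p}.

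The main obstacle --- and the source of the ``$\approx$'' in the statement --- is the mean-value substitution in step (i): it is exact only for deterministic fading, so I would argue it is a good approximation on the grounds that the dominant variability of $\textsf{SIR}_1$ stems from the heavy-tailed interference $I_1$ rather than from the (comparatively concentrated) selected gains, deferring numerical validation to Section~\ref{sec:result}. A secondary technical point is justifying the interchange of the Gil--Pelaez integral with the averaging over $i$ and confirming convergence of the inner integral: near $w=0$ the integrand is integrable because the exponent grows like $w^{2/\alpha}$ with $2/\alpha>0$, and at large $w$ the factor $\exp(-\bar{p}_O\adensity\pi\frac{R_s^2}{2}\Gamma[\frac{\alpha+2}{\alpha}]\Gamma[\frac{\alpha-2}{\alpha}](-\ii w)^{2/\alpha})$ decays since $\cos(\pi/\alpha)>0$ for $2<\alpha\le 6$. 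Everything else --- the gamma identity of the first paragraph and the reuse of the density-$\bar{p}_O\adensity$ interference transform --- is routine once these two pieces are in place.
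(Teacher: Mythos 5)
Your proposal is correct and follows essentially the same route as the paper's Appendix C: condition on $K$, reduce the $K\le N$ case to the RRS result of Theorem~\ref{theo:channeloutage1} (with the same gamma-function bookkeeping for the coefficient in~\eqref{eq:conprob1pppp}), average uniformly over the rank $i$ of the selected order statistic when $K>N$, and combine the mean-value (Jensen-type) substitution for $h_{(i),k}$ with Gil--Pelaez inversion of $\mathcal{M}_{I_1}(-\ii w)$. The only cosmetic difference is the order of operations --- you replace $h_{(i),k}$ by its mean before invoking Gil--Pelaez, while the paper inverts first and then approximates $\mathbb{E}_{h_{(i),K}}\left\{\exp\left(-\ii w h_{(i),K}/\gamma_1\right)\right\}$ by $\exp\left(-\ii w\,\mathbb{E}_{h_{(i),k}}\left\{h_{(i),k}\right\}/\gamma_1\right)$ --- which is the identical approximation and yields the identical integral in~\eqref{eq:conprob1p}.
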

\else
 \begin{proposition}\label{theo:channeloutage1p}
 Based on the system model in Section~\ref{sec:system}, under the channel-aware resource scheduling scheme, the aggregation phase's average channel success probability experienced at a typical aggregator for a certain channel is shown in~\eqref{eq:conprob1pppp} and~\eqref{eq:conprob1p}, at the top of next page, where $\pouta^{c}(k)$ is the conditional channel success probability of the aggregation phase which is conditioned on the number of MTDs requiring data transmission $k$, $\apouta^{r}$ is presented in Theorem~\ref{theo:channeloutage1}, $h_{(i),k}$ is the $i$-th best fading gain given $k$ MTDs within the typical aggregator's serving zone and $\mathbb{E}_{h_{(i),k}}\!\left\{h_{(i),k}\right\}$ is its corresponding mean which can be obtained using the formula in~\cite[eq.(2.2)]{Gupta-1960} or numerical evaluation via Mathematica.
   \begin{figure*}[!t]
\normalsize
 \begin{align}
 \apouta^{c}=&\apouta^{r}\frac{\exp(\bar{m})\Gamma\![1+N,\bar{m}]-\Gamma\![1+N]}{\left(\exp(\bar{m})-1\right)\Gamma\![1+N]}+\sum_{k=N+1}^{\infty}\pouta^{c}(k)\frac{\Pr(K=k)}{1-\Pr(K=0)}, \label{eq:conprob1pppp}\\
\pouta^{c}(K)\!\approx&\!\frac{\sum\limits_{i=1}^{N}\!\left(\frac{1}{2}\!-\!\frac{1}{\pi}\int_0^{\infty}\!\!\frac{1}{w}\textrm{Im}\!\left\{\!\exp\!\left(\!-\bar{p}_O\adensity\pi\frac{R_s^2}{2}\Gamma\!\left[\frac{\alpha\!+\!2}{\alpha}\right]\Gamma\!\left[\frac{\alpha\!-\!2}{\alpha}\right]
 \!\left(-\ii w\right)^{\!\frac{2}{\alpha}}\right)\!\exp\!\left(\!-\ii w\frac{\mathbb{E}_{h_{(i),k}}\left\{h_{(i),k}\right\}}{\gamma_1}\!\right)\!\!\right\}\textup{d}w\!\!\right)}{N},\label{eq:conprob1p}
 \end{align}
 \hrulefill
\vspace*{4pt}
\vspace{-0.05 in}
\end{figure*}
 \end{proposition}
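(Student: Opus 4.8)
The plan is to condition on the number $K$ of MTDs requesting transmission inside the typical aggregator's serving zone and to split the average channel success probability into the two regimes $K\le N$ and $K>N$. Conditioning on the aggregator being active (i.e. $K\ge 1$, so that at least one channel is occupied), I would write $\apouta^{c}=\sum_{k=1}^{\infty}\Pr(K=k\mid K\ge 1)\,p_{\mathrm{suc}}(k)$, where $p_{\mathrm{suc}}(k)$ is the channel success probability given $k$ requesting MTDs and $\Pr(K=k\mid K\ge 1)=\Pr(K=k)/(1-\Pr(K=0))$ is the Poisson PMF conditioned on $K\ge 1$. The two regimes behave very differently under CRS: when $k\le N$ every MTD is granted a channel, so no selection occurs and the gain seen on an occupied channel is an ordinary Nakagami-$m_1$ (Gamma) variable; hence $p_{\mathrm{suc}}(k)=\apouta^{r}$, the RRS value of Theorem~\ref{theo:channeloutage1}. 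When $k>N$ only the $N$ strongest MTDs are served, and since the channel-to-MTD assignment is random, a given channel carries the $i$-th ordered gain $h_{(i),k}$ with probability $1/N$, giving $p_{\mathrm{suc}}(k)=\pouta^{c}(k)=\frac{1}{N}\sum_{i=1}^{N}\Pr(\textsf{SIR}_1>\gamma_1\mid h_{(i),k})$.

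The first (routine) step is to collect the $k\le N$ terms. Summing $\apouta^{r}\Pr(K=k)/(1-\Pr(K=0))$ over $k=1,\dots,N$ and using the closed form of the regularized upper incomplete gamma function for integer order, $\Gamma[1+N,\bar m]/\Gamma[1+N]=e^{-\bar m}\sum_{k=0}^{N}\bar m^{k}/k!$, reproduces exactly the first term of~\eqref{eq:conprob1pppp}; this is a short algebraic manipulation I would relegate to a couple of lines. The terms $k>N$ are carried over verbatim as the infinite sum in~\eqref{eq:conprob1pppp}, so all the genuine work lies in establishing $\pouta^{c}(k)$.

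For $\pouta^{c}(k)$ I would start from $\textsf{SIR}_1=h_{(i),k}/I_1$ (full path-loss inversion with $\rho=1$ makes the received desired power equal to the fading gain) and from the aggregate interference $I_1$, whose Laplace transform $\mathcal{M}_{I_1}(s)$ is given in Theorem~\ref{theo:channeloutage1} for an interferer field of density $\bar{p}_O\adensity$. The main obstacle is that $h_{(i),k}$, the $i$-th order statistic of $k$ i.i.d. Gamma gains, has no tractable density, so evaluating $\Pr(h_{(i),k}>\gamma_1 I_1)$ exactly is hopeless. My workaround --- the origin of the ``$\approx$'' in~\eqref{eq:conprob1p} --- is to replace the ordered gain by its mean, approximating $\Pr(\textsf{SIR}_1>\gamma_1\mid h_{(i),k})\approx \Pr\!\big(I_1<\mathbb{E}\{h_{(i),k}\}/\gamma_1\big)$, i.e. treating the strong desired gain as essentially deterministic. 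The required mean $\mathbb{E}\{h_{(i),k}\}$ of the order statistic of Gamma variables is supplied by \cite[eq.(2.2)]{Gupta-1960}.

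It then remains to evaluate the CDF of $I_1$ at the point $x=\mathbb{E}\{h_{(i),k}\}/\gamma_1$. Since only the Laplace transform of $I_1$ is available, I would invoke the Gil--Pelaez inversion theorem, $F_{I_1}(x)=\tfrac12-\tfrac1\pi\int_0^\infty \tfrac1w\,\mathrm{Im}\{e^{-\ii w x}\phi_{I_1}(w)\}\,\mathrm{d}w$, with the characteristic function obtained from the MGF by the analytic continuation $\phi_{I_1}(w)=\mathcal{M}_{I_1}(-\ii w)=\exp(-\bar{p}_O\adensity\pi\frac{R_s^2}{2}\Gamma[\frac{\alpha+2}{\alpha}]\Gamma[\frac{\alpha-2}{\alpha}](-\ii w)^{2/\alpha})$. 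Substituting $x$ and $\phi_{I_1}$ yields precisely the integrand of~\eqref{eq:conprob1p}, and averaging over the $N$ equally likely channel positions $i$ produces the leading $\frac1N\sum_{i=1}^{N}$. The points requiring care are the sign convention in the Gil--Pelaez formula and the legitimacy of the substitution $s\mapsto-\ii w$, together with justifying the mean-gain approximation, whose accuracy the paper defers to the numerical section.
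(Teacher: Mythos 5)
Your proposal is correct and follows essentially the same route as the paper's Appendix C: split on $K\le N$ versus $K>N$ (conditioned on $K\ge 1$), treat the $i$-th ordered gain as equally likely on a channel, handle the intractable order-statistic law by replacing $h_{(i),k}$ with its mean from \cite[eq.(2.2)]{Gupta-1960}, and evaluate the resulting interference CDF via Gil--Pelaez with $\mathcal{M}_{I_1}(-\ii w)$, then average over $i$ and $K$. The only cosmetic difference is that you apply the mean-value (Jensen-type) approximation before inverting, whereas the paper inverts first and then approximates $\mathbb{E}_{h_{(i),K}}\{\exp(-\ii w h_{(i),K}/\gamma_1)\}$ inside the integrand --- the two orderings yield the identical expression~\eqref{eq:conprob1p}.
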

\fi

 \textit{Proof:} See Appendix C.
  \begin{figure}
        \centering
        \includegraphics[width=0.5  \textwidth]{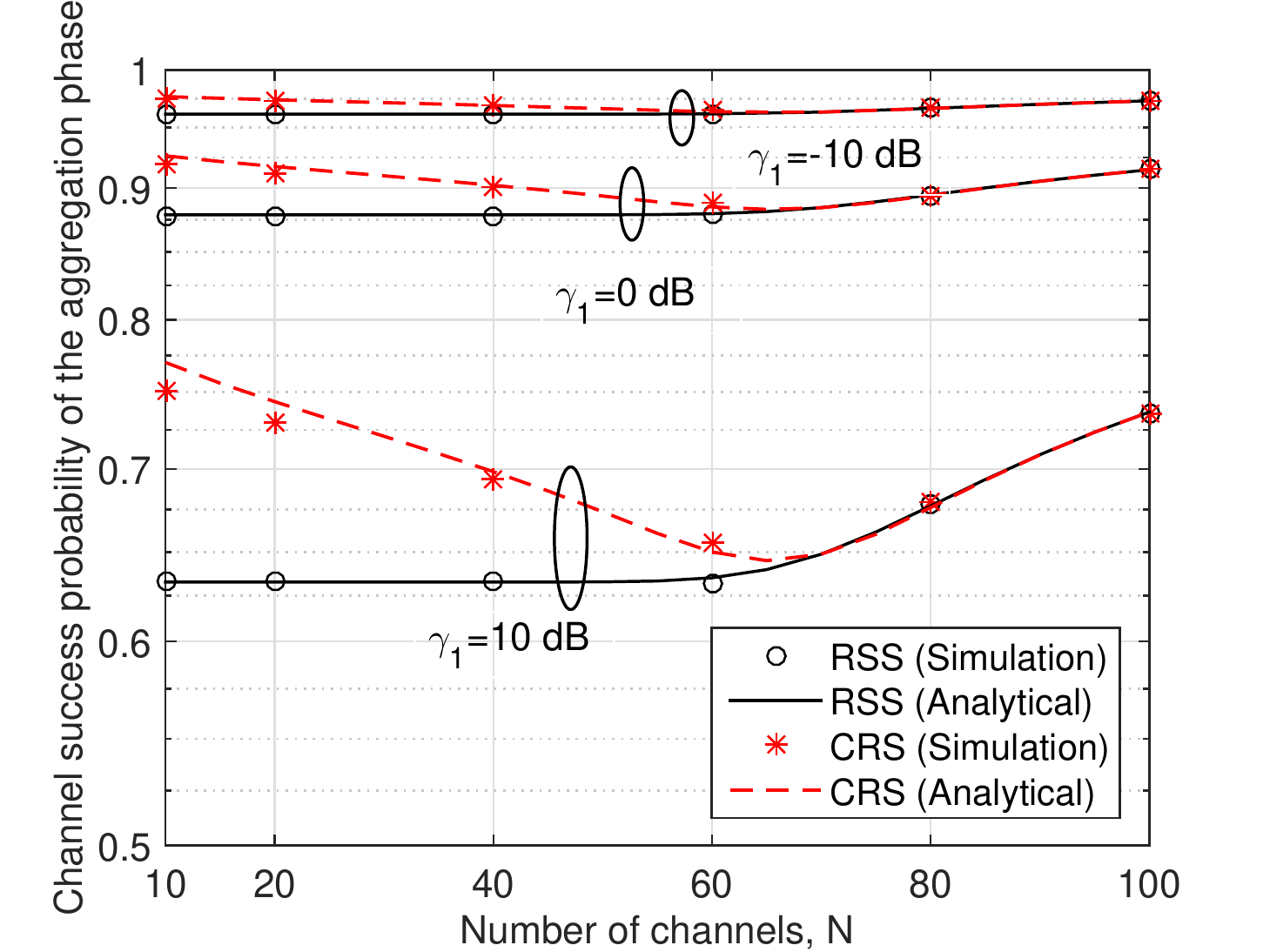}
        \vspace{-1.7em}\caption{Channel success probability of the aggregation phase, $\apouta$, versus the number of channels, $N$, with \redcom{$\bar{m}=70$} and different SIR threshold for both of the RRS and CRS.}
        \label{fig_channeloutage1}
\end{figure}
\redcom{\begin{remark}
To the best of our knowledge, $\apouta^{c}$ in Proposition~\ref{theo:channeloutage1p} is a new result in the literature. From~\eqref{eq:conprob1pppp}, we can see that when $N$ is very large, especially larger than $\bar{m}$, the term $\frac{\Pr(K=k)}{1-\Pr(K=0)}$ can become negligibly small such that $\apouta^{c}$ is almost the same as $\apouta^{r}$, i.e., when $N$ is very large, the CRS and RRS schemes perform the same in terms of the aggregation phase's average channel success probability.
\end{remark}}

\textit{\underline{Validation and Insights:}} Before ending this section, we validate the analytical results of the channel success probability for two schemes. The adopted system parameters are listed in Table~\ref{tb:para}. Fig.~\ref{fig_channeloutage1} shows that the analytical results provide a good match with the simulation results, and the small gap for the CRS comes from the Jensen's inequality for Proposition~\ref{theo:channeloutage1p}. As illustrated in Fig.~\ref{fig_channeloutage1}, when $N$ is small, $\apouta^{c}$ for CRS outperforms $\apouta^{r}$ for RRS. As $N$ increases, the performance difference between RRS and CRS becomes less obvious and finally their curves overlap. This is due to the fact that when $N$ is large (i.e., comparable to or larger than $\bar{m}$), most of the time the number of MTDs is less than the available resources such that the implementation of the CRS is almost the same as the RRS. \redcom{This is in agreement with our observation in Remark 3.}

Additionally, as the number of channels increases, the success probability for RRS keeps constant at first and then increases. This is because the larger value of $N$ leads to the lower occupation probability for a certain channel and the interference on the channel is reduced which improves the success probability. Unlike the RRS, the channel success probability of CRS decreases at first and then increases. When $N$ is really small, after selecting the better fading, the fading gain for CRS is far greater than the fading gain without ordering and consequently the channel
success probability is much better. Once $N$ increases, the fading scenario tends to be the case without ordering.

 \section{Relaying Phase}\label{sec:secondphase}
 \subsection{Number of Active Channels}
 The average channel success probability derived in Theorem~\ref{theo:channeloutage1} and Proposition~\ref{theo:channeloutage1p} is valid for one channel only. There are a total number of $N$ channels for each aggregator and it is necessary to investigate the number of active channels (i.e., the \redcom{data} from \redcom{served} MTDs can be successfully decoded by the aggregator on these channels) based on the derived channel success probability of the aggregation phase. Such a distribution is relevant to the metric~\textit{average number of successful MTDs} from~\eqref{eq:generalnumbersuc} and also plays a key role in determining the channel success probability of the relaying phase as shown in the following subsection. The results of this distribution for the considered resource scheduling schemes are presented below.

\begin{corollary}\label{coro:distribution1}
Based on the system model in Section~\ref{sec:system}, under the random resource scheduling scheme, the PMF of the number of active channels, $K_1$, for a typical aggregator is approximated by
\ifCLASSOPTIONpeerreview
\begin{align}\label{eq:distribution1}
\Pr\!\,^{r}\!\left(K_1=k_1\right)\approx&\frac{\exp\!\left(-\bar{m}\apouta^{r}\right)\bar{m}^{1+N}(\apouta^{r})^{k_1}\left(1\!-\!\apouta^{r}\right)^{1-k_1+N}E_{k_1-N}[\bar{m}(1-\apouta^{r})]}{\Gamma\![1+k_1]\Gamma\![1-k_1+N]}\nonumber\\
&+\frac{\binom{N}{k_1}(\apouta^{r})^{k_1}\left(1-\apouta^{r}\right)^{N-k_1}\Gamma\![1+N,0,\bar{m}]}{\Gamma\![1+N]},
\end{align}
\else
\begin{align}\label{eq:distribution1}
&\Pr\!\,^{r}\!\left(K_1=k_1\right)\approx\frac{\binom{N}{k_1}\Gamma\![1+N,0,\bar{m}]}{(\apouta^{r})^{-k_1}\left(1-\apouta^{r}\right)^{k_1-N}\Gamma\![1+N]} \nonumber\\
&+ \frac{\exp\!\left(-\bar{m}\apouta^{r}\right)\bar{m}^{1+N}E_{k_1-N}[\bar{m}(1-\apouta^{r})]}{(\apouta^{r})^{-k_1}\left(1\!-\!\apouta^{r}\right)^{k_1-N-1}\Gamma\![1+k_1]\Gamma\![1-k_1+N]},
\end{align}
\fi
\noindent where $E_n[z]$ denotes the exponential integral function, $\apouta^{r}$ is presented in Theorem~\ref{theo:channeloutage1} and $k_1$ is in the range of $[0,N]$.
\end{corollary}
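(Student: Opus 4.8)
The plan is to obtain the PMF by conditioning on the number $K$ of MTDs requesting transmission in the typical aggregator's serving zone, which is Poisson with mean $\bar{m}$. Since at most one MTD occupies a channel and the aggregator has $N$ channels, the number of \emph{occupied} channels is $\min(K,N)$. For the number of \emph{active} channels I would invoke the per-channel result of Theorem~\ref{theo:channeloutage1}: each occupied channel carries a desired link that succeeds with probability $\apouta^{r}$. The central modelling step, and the source of the ``$\approx$'' in the statement, is to treat the success events on distinct channels as independent, so that conditioned on $K=k$ one has $K_1 \mid K=k \sim \mathrm{Binomial}\!\left(\min(k,N),\apouta^{r}\right)$. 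In reality the per-channel interference fields are correlated because they are generated by the same parent process $\Phi_a$ of interfering aggregators; neglecting this correlation is exactly the approximation, whose accuracy is checked in the numerical section.

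Next I would write $\Pr^{r}(K_1=k_1)=\sum_{k\ge k_1}\Pr(K=k)\,\Pr(K_1=k_1\mid K=k)$ and split the sum at $k=N$. For $k\ge N+1$ the number of occupied channels saturates at $N$, so these terms factor as $\binom{N}{k_1}(\apouta^{r})^{k_1}(1-\apouta^{r})^{N-k_1}$ multiplied by the Poisson tail $\Pr(K\ge N+1)$. Using the standard identity $\sum_{j\ge N+1}\frac{\bar{m}^{j}e^{-\bar{m}}}{j!}=\frac{\gamma(N+1,\bar{m})}{\Gamma(N+1)}=\frac{\Gamma[1+N,0,\bar{m}]}{\Gamma[1+N]}$ reproduces the second term of~\eqref{eq:distribution1} verbatim.

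For the low-occupancy part $k_1\le k\le N$ I would substitute $\Pr(K=k)=\frac{\bar{m}^{k}e^{-\bar{m}}}{k!}$ together with $\binom{k}{k_1}$, cancel the $k!$ to leave $\frac{1}{k_1!\,(k-k_1)!}$, and re-index with $j=k-k_1$. This collapses the sum to a truncated exponential series $\sum_{j=0}^{N-k_1}\frac{(\bar{m}(1-\apouta^{r}))^{j}}{j!}$, which equals $e^{\bar{m}(1-\apouta^{r})}\frac{\Gamma(N-k_1+1,\bar{m}(1-\apouta^{r}))}{(N-k_1)!}$. Finally I would convert the incomplete gamma to the exponential-integral form via $E_n(z)=z^{\,n-1}\Gamma(1-n,z)$ with $n=k_1-N$, giving $\Gamma(N-k_1+1,\bar{m}(1-\apouta^{r}))=(\bar{m}(1-\apouta^{r}))^{N-k_1+1}E_{k_1-N}[\bar{m}(1-\apouta^{r})]$; collecting the surviving powers of $\bar{m}$, $\apouta^{r}$ and $(1-\apouta^{r})$ and writing $k_1!=\Gamma[1+k_1]$, $(N-k_1)!=\Gamma[1-k_1+N]$ then matches the first term of~\eqref{eq:distribution1}.

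The genuinely delicate steps are two. Conceptually, the independence assumption is the only real approximation and is the part that would need justification (or deferral to simulation); everything after it is an exact identity. Technically, the main obstacle is the special-function bookkeeping in the last step, namely recognizing the finite exponential sum as an upper incomplete gamma and then flipping it to the $E_{k_1-N}$ form with the correct powers, since a sign error in the index $n=k_1-N$ or in the exponent $N-k_1+1$ would spoil the match.
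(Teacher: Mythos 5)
Your proposal is correct and follows essentially the same route as the paper's Appendix~D: the per-channel independence approximation giving $K_1\mid K \sim \mathrm{Binomial}\left(\min(K,N),\apouta^{r}\right)$, averaging over the Poisson distribution of $K$, and splitting the sum at $k=N$. In fact, your special-function bookkeeping (truncated exponential series $\to$ incomplete gamma $\to$ $E_{k_1-N}$ form, and the Poisson tail $\to$ $\Gamma[1+N,0,\bar{m}]/\Gamma[1+N]$) supplies exactly the ``further simplification'' step that the paper states but does not show.
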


\textit{Proof:} See Appendix D.

\begin{corollary}\label{coro:distribution1p}
Based on the system model in Section~\ref{sec:system}, under the channel-aware resource scheduling scheme, the PMF of the number of active channels, $K_1$, for a typical aggregator is approximated by
\ifCLASSOPTIONpeerreview
\begin{align}\label{eq:distribution2}
\Pr\!\,^{c}\!\left(K_1=k_1\right)\approx&\frac{\exp\!\left(-\bar{m}\apouta^{r}\right)\bar{m}^{1+N}(\apouta^{r})^{k_1}\left(1\!-\!\apouta^{r}\right)^{1-k_1+N}E_{k_1-N}[\bar{m}(1-\apouta^{r})]}{\Gamma\![1+k_1]\Gamma\![1-k_1+N]}\nonumber\\
&+\sum_{k=N+1}^{\infty}\binom{N}{k_1}\left(\pouta^{c}(k)\right)^{k_1}(1-\pouta^{c}(k))^{N-k_1}\frac{\bar{m}^k\exp(-\bar{m})}{k!},
\end{align}
\else
\begin{align}\label{eq:distribution2}
&\Pr\!\,^{c}\!\left(K_1=k_1\right)\nonumber\\
&\approx\!\!\!\sum_{k=N+1}^{\infty}\!\binom{N}{k_1}\left(\pouta^{c}(k)\right)^{k_1}(1-\pouta^{c}(k))^{N-k_1}\frac{\bar{m}^k\exp(-\bar{m})}{k!}\nonumber\\
&+
\frac{\exp\!\left(-\bar{m}\apouta^{r}\right)\bar{m}^{1+N}E_{k_1-N}[\bar{m}(1-\apouta^{r})]}{(\apouta^{r})^{-k_1}\left(1\!-\!\apouta^{r}\right)^{k_1-N-1}\Gamma\![1+k_1]\Gamma\![1-k_1+N]},
\end{align}
\fi
\noindent where $\pouta^{c}(k)$ is presented in~\eqref{eq:conprob1p}.
\end{corollary}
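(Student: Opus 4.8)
The plan is to condition on the instantaneous number of MTDs $K=k$ requiring transmission in the typical aggregator's serving zone, and to split the analysis into the two regimes $k\le N$ and $k>N$, mirroring the decomposition already used for the RRS case in Corollary~\ref{coro:distribution1}. The governing observation is that CRS and RRS differ \emph{only} when there are more MTDs than channels: when $k\le N$ every MTD is granted a channel, so no ordering of the fading gains takes place and each occupied channel is statistically identical to an RRS channel; the scheduling gain appears solely in the regime $k>N$, where CRS assigns the $N$ channels to the $N$ strongest gains out of $k$.

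For the regime $k\le N$, I would argue that conditioned on $K=k$ the typical aggregator occupies exactly $k$ channels, each carrying an unordered Nakagami-$m_1$ desired link and suffering inter-cluster interference from the HPPP $\Phi_{\textrm{MTD}}^{\textrm{served}}$ of density $\bar{p}_O\adensity$, whose statistics do not depend on $k$. Hence the per-channel success probability is precisely $\apouta^{r}$ from Theorem~\ref{theo:channeloutage1}, and treating the $k$ success events as independent Bernoulli trials gives $K_1\mid\{K=k\}\sim\mathrm{Bin}(k,\apouta^{r})$. Averaging $\sum_{k=k_1}^{N}\binom{k}{k_1}(\apouta^{r})^{k_1}(1-\apouta^{r})^{k-k_1}$ against the Poisson PMF $\Pr(K=k)$ reproduces term-for-term the $k\le N$ contribution already evaluated in the proof of Corollary~\ref{coro:distribution1} (Appendix D), which I would carry over verbatim to obtain the closed-form exponential-integral summand $E_{k_1-N}[\bar{m}(1-\apouta^{r})]$ appearing in~\eqref{eq:distribution2}.

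For the regime $k>N$, all $N$ channels are occupied and CRS selects the $N$ largest gains $\{h_{(1),k},\dots,h_{(N),k}\}$. Invoking the conditional channel success probability $\pouta^{c}(k)$ from Proposition~\ref{theo:channeloutage1p} (equation~\eqref{eq:conprob1p}) as the common per-channel success probability, and again modelling the $N$ success events as independent, I would write $K_1\mid\{K=k\}\sim\mathrm{Bin}(N,\pouta^{c}(k))$. Unlike RRS, the success probability now varies with $k$, so the Poisson weight cannot be pulled out of the binomial; one instead keeps the full sum $\sum_{k=N+1}^{\infty}\binom{N}{k_1}(\pouta^{c}(k))^{k_1}(1-\pouta^{c}(k))^{N-k_1}\Pr(K=k)$, which is exactly the remaining summand of~\eqref{eq:distribution2}. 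Adding the two regimes yields the stated PMF.

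The main obstacle, and the origin of the $\approx$ sign, is the independence assumption on the per-channel success indicators. In reality the channels of one aggregator share a common random interference field (and, under CRS, correlated fading order statistics), so the success events are neither independent nor, in the $k>N$ case, identically distributed: the $i$-th strongest gain yields a rank-dependent success probability, and collapsing these onto the single average $\pouta^{c}(k)$ replaces a Poisson-binomial by an ordinary binomial. A secondary approximation is inherited from $\pouta^{c}(k)$ itself, which already relies on the Jensen-type step in Proposition~\ref{theo:channeloutage1p}. Rather than attempting an exact joint characterization, which would demand the full spatial correlation structure of the interference across co-served channels, I would validate both approximations a posteriori through close agreement with Monte-Carlo simulation.
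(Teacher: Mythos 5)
Your proposal matches the paper's own derivation essentially step for step: the paper proves this corollary by reusing the Appendix~D argument for Corollary~\ref{coro:distribution1} (conditional binomial PMF of $K_1$ given $K$, averaged over the Poisson distribution of $K$, under the same channel-independence approximation), with the single modification that for $K>N$ the success parameter becomes the $k$-dependent $\pouta^{c}(K)$, which is exactly why the infinite sum cannot be collapsed into a closed form as in the RRS case. Your closing discussion of the approximation sources (correlated interference across channels, rank-dependent success probabilities collapsed into the average $\pouta^{c}(k)$, and the inherited Jensen step) is a faithful, indeed more explicit, account of what the paper acknowledges and validates by simulation.
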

\begin{proof}
The proof is similar to the derivation of Corollary~\ref{coro:distribution1}. But in~\eqref{eq:distK1}, when $K>N$, $\Pr(K_1=k_1|K)$ becomes $\binom{N}{k_1}\left(\pouta^{c}(K)\right)^{k_1}(1-\pouta^{c}(K))^{N-k_1}$ since the channel success probability depends on the number of MTDs requiring data transmission.
\end{proof}
  \subsection{Average Channel Success Probability}
 In the relaying phase, each aggregator transmits its aggregated data to its nearest BS in a round-robin fashion. We assume that the aggregator with at least one active channel (i.e., $K_1\geq 1$) can do the data transmission and it is called the active aggregator. Otherwise, the aggregator (i.e., $K_1=0$) becomes silent. Note that the location of active aggregators is somewhat correlated. \redcom{For example, for those aggregators that are very close to each other, the interference on each channel may be very severe. Hence, it is more likely that these aggregators become inactive.} For analytical tractability, we model the location of these active aggregators as a HPPP with density $\adensity'=(1-\Pr(K_1=0))\adensity$.\footnote{In this work, $\Pr(K_1=k_1)$ is the general distribution of the number of active channels. $\Pr(K_1=k_1)=\Pr\!\,^{r}(K_1=k_1)$ for the RRS, and $\Pr(K_1=k_1)=\Pr\!\,^{c}(K_1=k_1)$ for the CRS.}

 For an active aggregator with $K_1$ active channels, its aggregated data can be successfully decoded by its associated BS as long as its SIR meets the following condition~\cite{Singh-2015,7136376}
 \begin{align}\label{eq:outage2}
 D K_1\leq \frac{T W}{N_a}\log\left(1+\textsf{SIR}_2\right),
 \end{align}
 \noindent where $D$ is the data size for each MTD, $T$ is the total transmission time for the relaying phase, $W$ is the available bandwidth for each BS and $N_a$ is the total number of active aggregators associated to the same BS.

 By re-arranging~\eqref{eq:outage2}, we can then write the average channel success probability for the relaying phase as
 \begin{align}
\apoutb=\mathbb{E}_{K_1,N_a}\left\{\Pr\left(\textsf{SIR}_2\geq2^{\frac{D K_1 N_a}{TW}}-1\right)\right\}.
 \end{align}

 \noindent Here we define $\gamma_2
\triangleq 2^{\frac{D K_1 N_a}{TW}}-1$, which depends on the number of active channel $K_1$ and the number of associated active aggregators $N_a$. The main result of the channel success probability of the relaying phase is presented in the following.

\begin{proposition}\label{theo:channeloutage2}
Based on the system model described in Section~\ref{sec:system}, the average channel success probability experienced at a typical BS from one of its associated active MTD is
\ifCLASSOPTIONpeerreview
\begin{align}
\apoutb=&\sum_{k_1=1}^{N}\poutb(k_1)\frac{\Pr(K_1=k_1)}{1-\Pr(K_1=0)},\label{eq:outage2_final}\\
\poutb(k_1)=&\sum_{n_a=1}^{\infty}\left(\sum\limits_{t=0}\limits^{m_2-1}\frac{(-s)^{t}}{t!}\left.\frac{d^t}{d s^t}\mathcal{M}_{I_{2}}(s)\right|_{s=m_2\left(2^{\frac{D k_1 n_a}{W}}-1\right)}\right)\frac{\Pr\left(N_a=n_a\right)}{1-\Pr\left(N_a=0\right)},\label{eq:outage2_con}
\end{align}
\else
\begin{align}
\apoutb=&\sum_{k_1=1}^{N}\poutb(k_1)\frac{\Pr(K_1=k_1)}{1-\Pr(K_1=0)},\label{eq:outage2_final}\\
\poutb(k_1)=&\!\sum_{n_a=1}^{\infty}\!\left(\sum\limits_{t=0}\limits^{m_2-1}\!\frac{(-s)^{t}}{t!}\left.\frac{d^t}{d s^t}\mathcal{M}_{I_{2}}(s)\right|_{s=m_2\left(2^{\frac{D k_1 n_a}{W}}\!-\!1\right)}\!\right)\nonumber\\
&\times\frac{\Pr\left(N_a=n_a\right)}{1-\Pr\left(N_a=0\right)},\label{eq:outage2_con}
\end{align}
\fi
\noindent where $\mathcal{M}_{I_{2}}(s)=\exp\left(-2(1-p_{\textrm{void}})s\frac{\,_2F_1\left[1,1-\frac{2}{\alpha},2-\frac{2}{\alpha},-s\right]}{\alpha-2}\right)$, $p_{\textrm{void}}=\left(1+\frac{\adensity'}{3.5\BSdensity}\right)^{-3.5}$ is the void probability, $\poutb(k_1)$ is the conditional channel success probability of the relaying phase which is conditioned on the number of active channels for an aggregator, $\Pr(K_1=k_1)$ is given in Corollaries~\ref{coro:distribution1} and~\ref{coro:distribution1p} for the two different schemes, and $\Pr\left(N_a=n_a\right)$ is the PMF of the number of active aggregators associated to the typical BS which is given in~\eqref{eq:Nadist}.
\end{proposition}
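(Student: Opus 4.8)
The plan is to start from the defining expectation $\apoutb=\mathbb{E}_{K_1,N_a}\{\Pr(\textsf{SIR}_2\geq\gamma_2)\}$, where $\gamma_2=2^{DK_1N_a/(TW)}-1$ as in~\eqref{eq:outage2}, and to peel off the two discrete conditioning variables before tackling the continuous SIR term. First I would condition on the typical aggregator being active, i.e.\ $K_1=k_1\geq1$, which introduces the normalization $\Pr(K_1=k_1)/(1-\Pr(K_1=0))$; similarly, since the typical aggregator is itself one active aggregator, at least one is always present, so conditioning on $N_a=n_a$ yields $\Pr(N_a=n_a)/(1-\Pr(N_a=0))$. The PMFs $\Pr(K_1=k_1)$ and $\Pr(N_a=n_a)$ are already supplied by Corollaries~\ref{coro:distribution1} and~\ref{coro:distribution1p} and by~\eqref{eq:Nadist}, so these two outer sums can be assembled directly, giving the structure of~\eqref{eq:outage2_final} and~\eqref{eq:outage2_con}. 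For fixed $k_1$ and $n_a$ the threshold $\gamma_2$ is a deterministic constant, and what remains is the conditional success probability $\poutb(k_1)$.

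For that conditional term I would use the full-inversion power control: an aggregator transmits at $r_a^\alpha$, so the desired received power at its serving BS collapses to the fading $h'$ alone and $\textsf{SIR}_2=h'/I_2$, making the success event $\{h'\geq\gamma_2 I_2\}$. Because $h'$ is Nakagami-$m_2$ with integer $m_2$, equivalently a unit-mean gamma variate of shape $m_2$, I would invoke the tail identity $\Pr(h'>x)=e^{-m_2x}\sum_{t=0}^{m_2-1}(m_2x)^t/t!$ and take the expectation over $I_2$. Since $\frac{d^t}{ds^t}e^{-sI_2}=(-I_2)^te^{-sI_2}$, this expectation collapses into the finite sum of Laplace-transform derivatives $\sum_{t=0}^{m_2-1}\frac{(-s)^t}{t!}\frac{d^t}{ds^t}\mathcal{M}_{I_2}(s)\big|_{s=m_2\gamma_2}$ that appears in~\eqref{eq:outage2_con}. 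This is precisely the device already employed in Theorem~\ref{theo:channeloutage1}, so it is routine once $\mathcal{M}_{I_2}(s)$ is in hand.

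The substantive step, and the one I expect to be the main obstacle, is evaluating $\mathcal{M}_{I_2}(s)=\mathbb{E}[e^{-sI_2}]$ for the uplink interference $I_2=\sum_i r_{a,i}^\alpha g_i d_i^{-\alpha}$, in which each interfering active aggregator contributes a power scaled by its own channel-inversion distance $r_{a,i}^\alpha$ and attenuated by $d_i^{-\alpha}$ towards the typical BS, with $g_i$ i.i.d.\ Rayleigh. Averaging the exponential fading first replaces each interferer term by $(1+s\,r_{a,i}^\alpha d_i^{-\alpha})^{-1}$, after which the PPP probability generating functional turns the expected product into the exponential of a spatial integral. Two features make this nonstandard: the active aggregators are not a genuine PPP, because the round-robin schedule lets at most one aggregator per Voronoi cell transmit on a given resource, so their positions are correlated and excluded from the serving cell; and their transmit powers are coupled to the cell geometry through $r_{a,i}$. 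I would resolve both as in~\cite{Singh-2015}, modelling the interferers as one active aggregator per non-empty BS cell, so that the fraction of silent cells enters through the void probability $p_{\textrm{void}}=(1+\adensity'/(3.5\BSdensity))^{-3.5}$ and the explicit node densities cancel once distances are normalized by the serving-link distance. The surviving radial integral is of Gauss-hypergeometric type, and evaluating it delivers the claimed $\mathcal{M}_{I_2}(s)=\exp(-2(1-p_{\textrm{void}})s\,{}_2F_1[1,1-\tfrac{2}{\alpha},2-\tfrac{2}{\alpha},-s]/(\alpha-2))$. Substituting this transform and its derivatives into the finite sum and re-inserting the two conditioning sums over $k_1$ and $n_a$ then completes the proof.
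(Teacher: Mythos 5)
Your proposal is correct and follows essentially the same route as the paper's Appendix E: the same de-conditioning over $K_1\geq 1$ and $N_a\geq 1$ with the truncated-PMF normalizations, the same gamma-tail/Laplace-derivative device for the Nakagami-$m_2$ desired link, and the same approximation of the interfering active aggregators as a HPPP of density $(1-p_{\textrm{void}})\BSdensity$ (one interferer per non-empty Voronoi cell, with the nearest-BS association forcing the interferer-to-typical-BS distance to exceed its own serving distance $r_a$), leading to the ${}_2F_1$ form of $\mathcal{M}_{I_2}(s)$ after the radial integral and the identity $\mathbb{E}_{r_a}\{r_a^2\}=\frac{1}{\pi\BSdensity}$. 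The only cosmetic difference is that you keep $TW$ in the threshold (consistent with the paper's proof) where the proposition's displayed equation writes $W$, which is a typo in the statement rather than a gap in your argument.
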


\textit{Proof:} See Appendix E.
\redcom{\begin{remark}
 For the special case of $m_2=1$, which corresponds to the case that the typical link experiences Rayleigh fading, we can have $\apoutb\approx\exp\left(-2(1-p_{\textrm{void}})s\frac{\,_2F_1\left[1,1-\frac{2}{\alpha},2-\frac{2}{\alpha},-s\right]}{\alpha-2}\right)$, where
$s\approx 2^{\frac{D \mathbb{E}_{K_1}[K_1]\mathbb{E}_{N_a}[N_a]}{W}}-1$, $\mathbb{E}_{K_1}[K_1]\approx N \bar{p}_O \apouta$ and $\mathbb{E}_{N_a}[N_a]=\frac{\adensity'}{\BSdensity}$. This shows that $\apoutb$ is a monotonic decreasing function of $s$. Thus if the density of BSs $\BSdensity$ increases, both $s$ and $1-p_{\textrm{void}}$ decrease, which improves $\apoutb$. If the number of channels $N$ increases, it increases $s$ as well as $\mathbb{E}_{K_1}[K_1]$ and hence $\apoutb$ degrades. Note that $\poutb(k_1)$ in~\eqref{eq:outage2_con} has been studied in the stochastic geometry literature with different system setups, e.g.,~\cite{Singh-2015,7136376,6516885}. However, in this paper, we are interested in $\apoutb$ in~\eqref{eq:outage2_final}.
\end{remark}}
  \begin{figure}
        \centering
        \includegraphics[width=0.5  \textwidth]{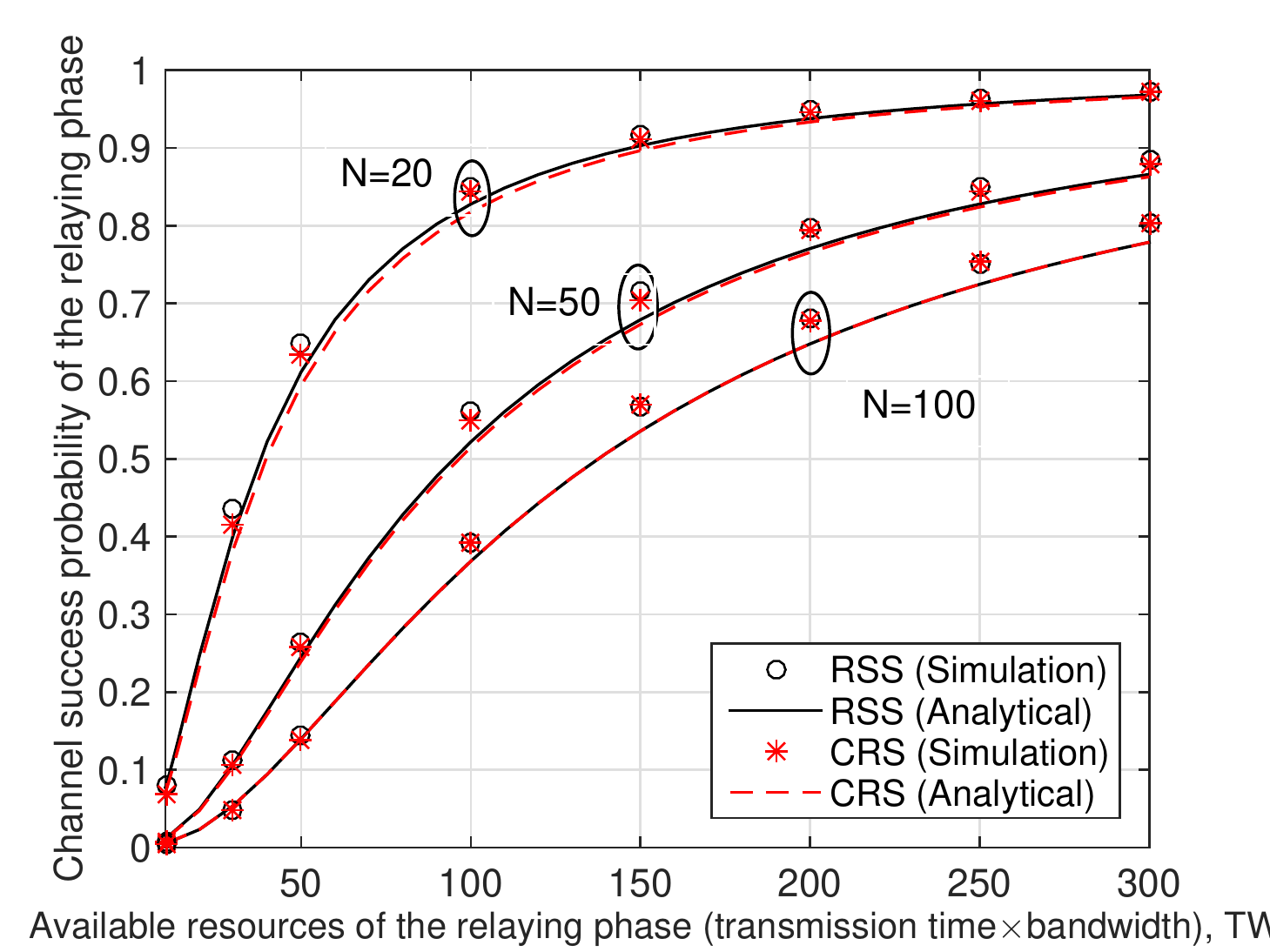}
        \vspace{-1.7em}\caption{Channel success probability of the relaying phase, $\apoutb$, versus the available resources, \redcom{transmission time$\times$bandwidth,} $TW$, for both RRS and CRS.}
        \label{fig_channeloutage2}
\end{figure}

\textit{\underline{Validation and Insights:}} Before ending this subsection, we validate the channel success probability of the relaying phase by comparing the analytical results with the simulation results, since the derived channel success probability is an approximate result. According to Fig.~\ref{fig_channeloutage2}, we can see that the derived analytical results match the simulation results fairly well. By comparing the solid lines with the dashed lines, we observe that when the number of channels $N$ is small, the channel success probability under CRS performs slightly worse than the RRS. Once $N$ gets larger, the channel success probability under two scheme is the same. This is due to the fact that the channel success probability is determined by the $\gamma_2$ (equivalently, the channel success probability of the aggregation phase). From Fig.~\ref{fig_channeloutage1}, we already find that $\apouta^{c}$ outperforms $\apouta^{r}$ when $N$ is small. In other words, under this scenario, the CRS is more likely to allow more MTDs to be successfully served by aggregators thereby increasing the probability of being a large value for $\gamma_2$. Consequently, it deteriorates the channel success probability of the relaying phase. When $N$ is large, there is no performance difference between the CRS and RRS for the aggregation phase, which leads to the same performance for the relaying phase.

Moreover, Fig.~\ref{fig_channeloutage2} implies that increasing the number of channels can worsen the channel success probability of the relaying phase. This is because a larger value of $N$ allows the aggregators to aggregate much more data from MTDs, which expands the distribution of active channels. As a result, the channel success probability of the relaying phase is reduced. However, Fig.~\ref{fig_channeloutage1} shows that increasing $N$ can improve the performance of the aggregation phase. The impact of $N$ on the overall performance will be assessed in Section~\ref{sec:result:N}. Before that, we present a summary of the key performance metrics in the next section.
\section{Summary of Key Performance Metrics}\label{sec:summary:metric}
With the facilitation of the previous derived results, we now can formulate the key performance metrics as summarized in Table~\ref{tb:2}.

Note that for the CRS scheme, the MTD success probability and probability of successful channel utilization are different from~\eqref{eq:summary_channel1} and \eqref{eq:summary_ulti}. We find that the non-drop probability (and occupy probability) and the aggregation phase's channel success probability are dependent as they are related to the number of MTDs $K$. Thus, we have
\ifCLASSOPTIONpeerreview
\begin{align} \label{eq:summary_channel2}
\psuc^{c}\approx&\left(\sum_{k=0}^{N}1\frac{\bar{m}^k\exp(-\bar{m})}{k!}\apouta^{r}+\sum_{k=N+1}^{\infty}\frac{N}{k}\pouta^{c}(k)\frac{\bar{m}^k\exp(-\bar{m})}{k!}\right)\apoutb^{c} \nonumber\\
=&\left(\frac{\Gamma\![1+N,\bar{m}]}{\Gamma\![1+N]}\apouta^{r}+\sum_{k=N+1}^{\infty}\frac{N}{k}\pouta^{c}(k)\frac{\bar{m}^k\exp(-\bar{m})}{k!}\right)\apoutb^{c},
\end{align}
\begin{align}\label{eq:summary_ulti2}
\achannel^{c}\approx &\left(\frac{\bar{m}\Gamma[1+N,\bar{m}]-\exp(-\bar{m})\bar{m}^{N+1}}{N^2\Gamma[N]}\apouta^{r}+\sum_{k=N+1}^{\infty}\pouta^{c}(k)\frac{\bar{m}^k\exp(-\bar{m})}{k!}\right)\apoutb^{c}.
\end{align}
\else
\begin{align} \label{eq:summary_channel2}
&\psuc^{c}\approx\left(\sum_{k=0}^{N}1\frac{\bar{m}^k\exp(-\bar{m})}{k!}\apouta^{r}\right.\nonumber\\
&\left.\quad\quad\quad\quad+\sum_{k=N+1}^{\infty}\frac{N}{k}\pouta^{c}(k)\frac{\bar{m}^k\exp(-\bar{m})}{k!}\right)\apoutb^{c} \nonumber\\
=&\left(\!\frac{\Gamma\![1+N,\bar{m}]}{\Gamma\![1+N]}\apouta^{r}+\!\!\sum_{k=N+1}^{\infty}\!\!\frac{N}{k}\pouta^{c}(k)\frac{\bar{m}^k\exp(-\bar{m})}{k!}\!\right)\apoutb^{c},
\end{align}
\begin{align}\label{eq:summary_ulti2}
\achannel^{c}\approx &\left(\frac{\bar{m}\Gamma[1+N,\bar{m}]-\exp(-\bar{m})\bar{m}^{N+1}}{N^2\Gamma[N]}\apouta^{r}\right.\nonumber\\
&\left.+\sum_{k=N+1}^{\infty}\pouta^{c}(k)\frac{\bar{m}^k\exp(-\bar{m})}{k!}\right)\apoutb^{c}.
\end{align}
\fi

Moreover, when we analyze the effect of aggregator's density in Section~\ref{sec:effect:dense}, we also consider the metric, average number of successful MTDs per km$^2$, which is related to average number of successful MTDs by $\lambda_a\bar{K}_{\textrm{suc}}$.
\begin{table*}[t]
\centering
\caption{Summary of the Key Performance Metric.}\label{tb:2}
\begin{tabular}{|c||c|c|c|c|c|c|c|} \hline
\multirow{2}{*}{scheme}&\multirow{2}{*}{metric}&\multirow{2}{*}{\makecell[c]{general\\form}}& \multirow{2}{*}{$\apndrop$}& \multirow{2}{*}{$\bar{p}_{O}$}& \multirow{2}{*}{\makecell[c]{$\apouta$,\\$\pouta(k)$}}& \multirow{2}{*}{{\makecell[c]{$\apoutb$,\\$\poutb(k_1)$}}}&\multirow{2}{*}{$\Pr(K_1=k_1)$} \\
&&&&&&&\\\cline{2-8}\hline
\multirow{3}{*}{\begin{sideways}RRS\end{sideways}}
&MTD success probability & \eqref{eq:summary_channel1} & \eqref{eq:nondropprob}&  & \eqref{eq:result_channelRRS}& \eqref{eq:outage2_final}$\&$\eqref{eq:outage2_con}  &\eqref{eq:distribution1}\\
&average number of successful MTDs &\eqref{eq:generalnumbersuc}  & &  & &$\eqref{eq:outage2_con}$ &\eqref{eq:distribution1}\\
&probability of successful channel utilization & \eqref{eq:summary_ulti} & &\eqref{eq:occupyprob}  & \eqref{eq:result_channelRRS}&\eqref{eq:outage2_final}$\&$\eqref{eq:outage2_con} &\eqref{eq:distribution1}\\\hline
\multirow{3}{*}{\begin{sideways}CRS\end{sideways}}
&MTD success probability& \eqref{eq:summary_channel2} & &  & \eqref{eq:conprob1p}& \eqref{eq:outage2_final}$\&$\eqref{eq:outage2_con}&\eqref{eq:distribution2}\\
&average number of successful MTDs &\eqref{eq:generalnumbersuc}  & &  & &$\eqref{eq:outage2_con}$ &\eqref{eq:distribution2}\\
&probability of successful channel utilization &\eqref{eq:summary_ulti2}  & &  &\eqref{eq:conprob1p} &\eqref{eq:outage2_final}$\&$\eqref{eq:outage2_con} &\eqref{eq:distribution2}\\\hline
\end{tabular}
\end{table*}

 \section{Numerical Results}\label{sec:result}
 \ifCLASSOPTIONpeerreview
\else
\begin{figure}[h!]
\centering
\subfigure[MTD success probability.]{\label{metric1}\includegraphics[width=0.41\textwidth]{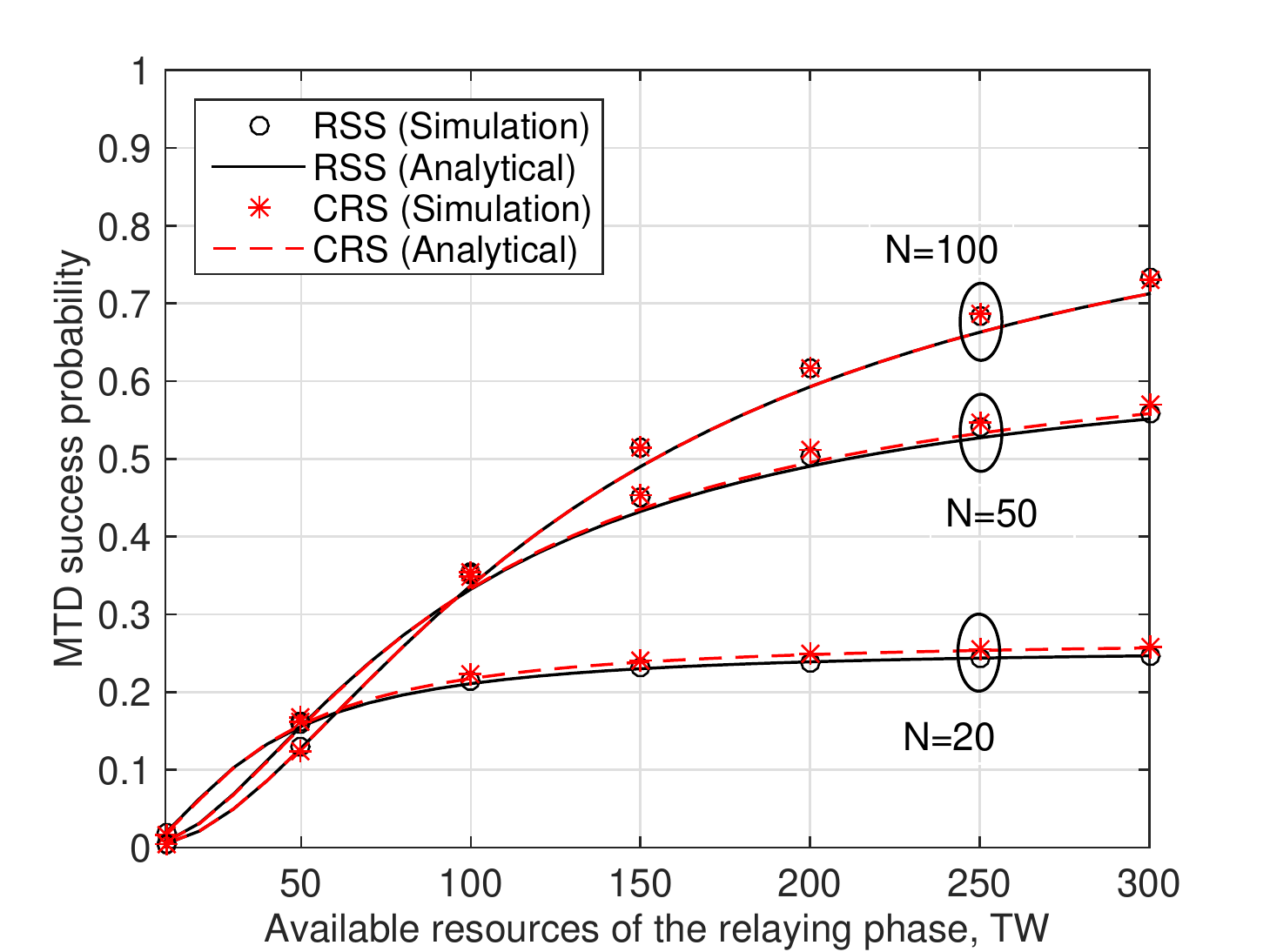}}
\subfigure[Average number of successful MTDs.]{\label{metric2}\includegraphics[width=0.41\textwidth]{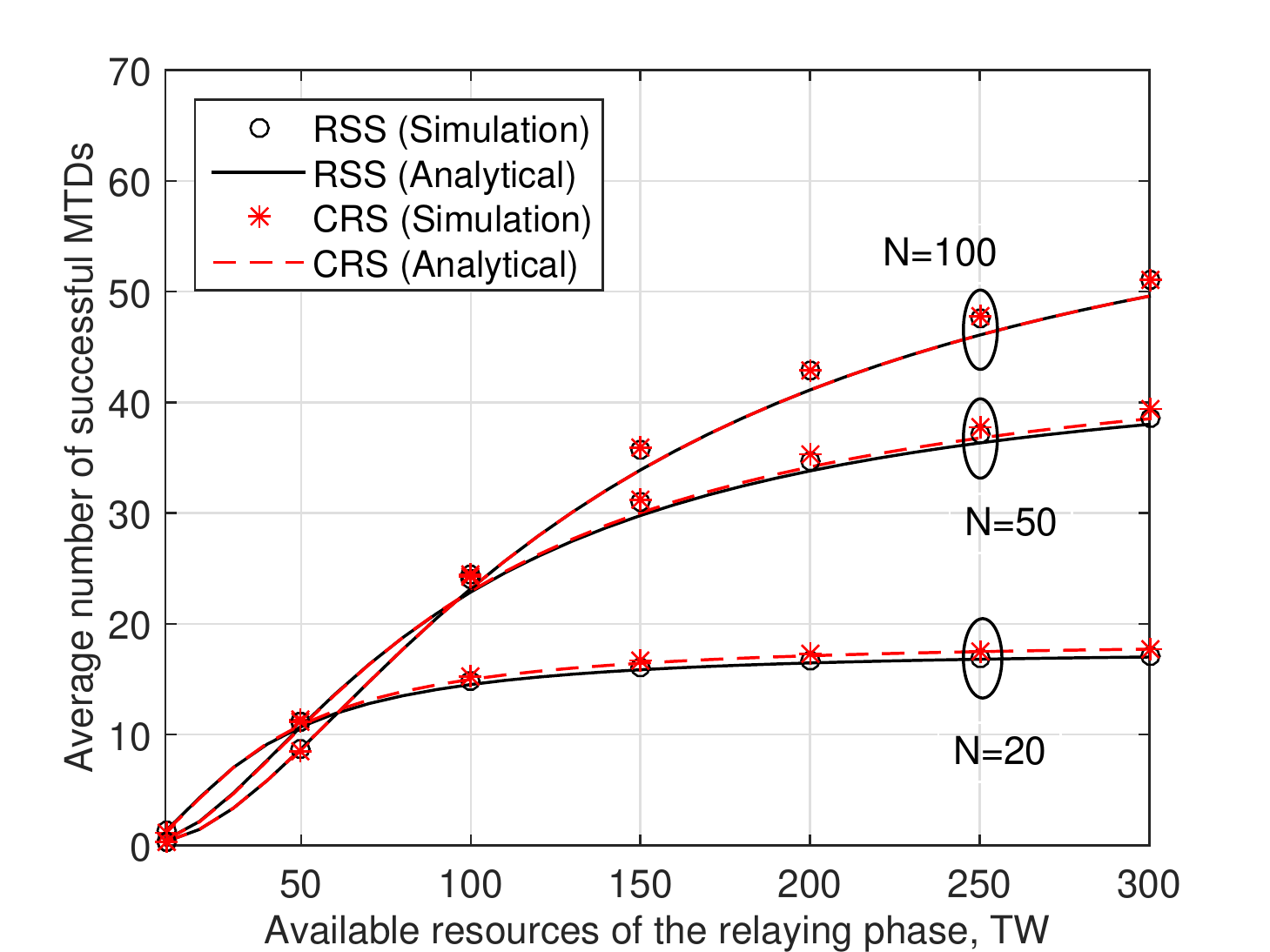}}
\subfigure[Probability of successful channel utilization.]{ \label{metric3}\includegraphics[width=0.41\textwidth]{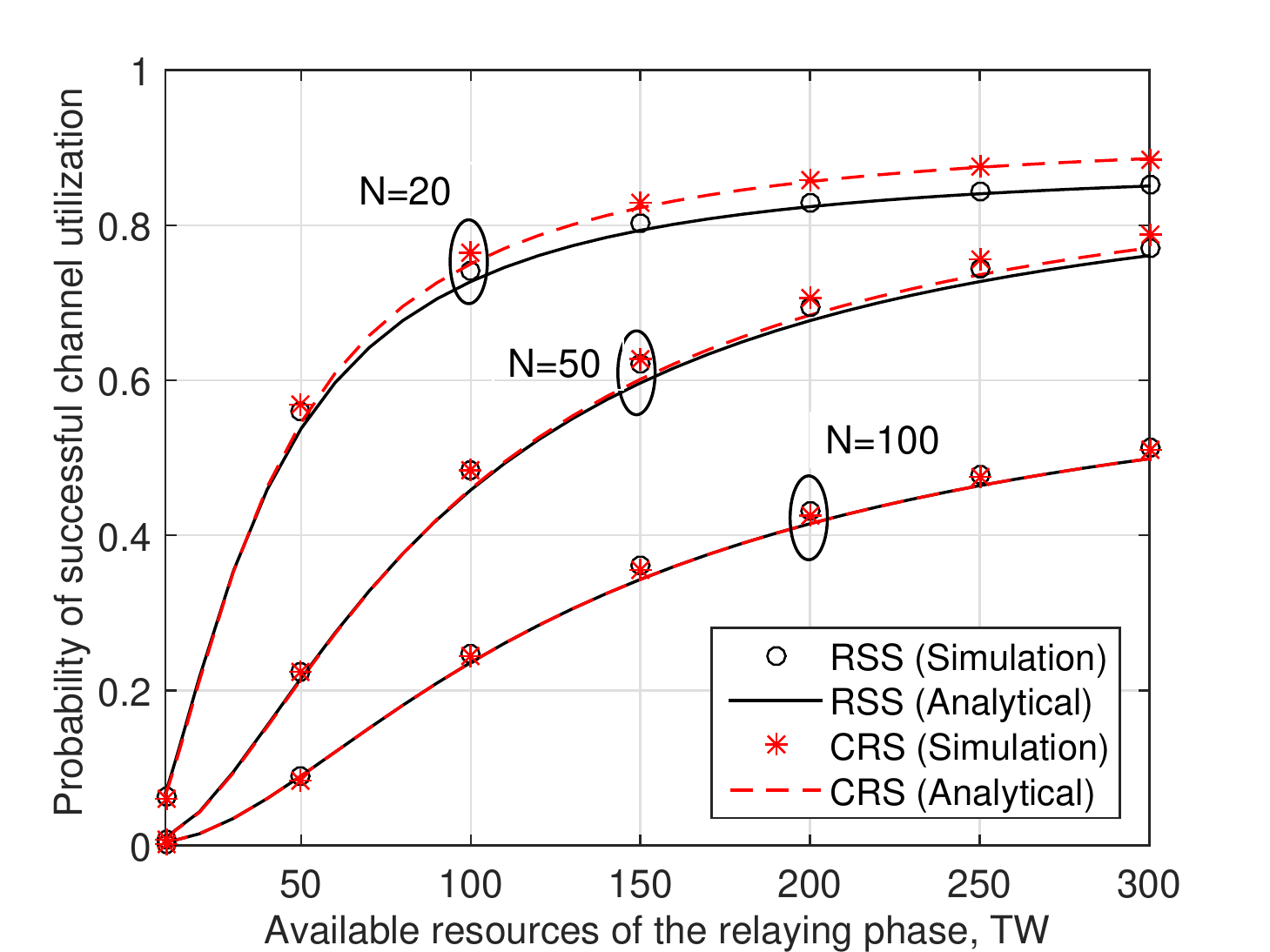}}
\caption{Available resources of the relaying phase, $TW$, versus the considered metrics with different number of channels $N$.}\label{fig_valid}
\end{figure}
\fi
  \begin{table}
\centering
\caption{Main System Parameter Values.}
\label{tb:para}
\begin{tabular}{|l|l|l|}\hline
Parameter & Symbol & Value \\
\hline
Density of BS  & $\BSdensity$ & $\frac{1}{\pi 500^2}$ $/$ m$^2$ (1.27 $/$ km$^2$)\\ \hline
Density of aggregators  & $\adensity$ & $10^{-4.5}$ $/$ m$^2$ (31.667 $/$ km$^2$)\\ \hline
Radius of serving zone & $R_s$ & 50 m \\ \hline
Path-loss exponent & $\alpha$ & $4$ \\ \hline
Average number of MTDs per aggregator & $\bar{m}$ & 70 \\ \hline
SIR threshold of the aggregation phase & $\gamma_1$ &$0$ dB \\ \hline
Nakagami-$m$ fading parameters & $m_1, m_2$ & $m_1=4, m_2=2$ \\ \hline
\end{tabular}
\end{table}

 In this section, we present the numerical results to investigate the performance of machine type communication with data aggregation. \redcom{Note that for the CRS scheme,~\eqref{eq:conprob1pppp},~\eqref{eq:distribution2},~\eqref{eq:outage2_final},~\eqref{eq:summary_channel2} and~\eqref{eq:summary_ulti2} involve infinite summations. We find that, for most cases, the summations converge after about 120 terms and we use this value to evaluate them.} To validate the numerical results, we also present simulation results which are generated using Matlab over $50000$ Monte Carlo simulation runs. \redcom{In order to eliminate the impact of boundary effects in the simulations, the BSs and aggregators are distributed in a disk with radius $3$ km and $6$ km, respectively.} Unless specified otherwise, the values of the main system parameters shown in Table~\ref{tb:para} are used. \redcom{Note that in Table~\ref{tb:para} we set different values for $m_1$ and $m_2$. This is because the aggregator's serving zone is much smaller compared to the BS's coverage region. Thus the link between a MTD and its serving aggregator is more likely to experience less severe fading than the link between an aggregator and its associated BS.} 

\subsection{Validation and Comparison of Schemes}
\ifCLASSOPTIONpeerreview
\begin{figure}[t]

\centering
\subfigure[MTD success probability.]{\label{metric1}\includegraphics[width=0.44\textwidth]{MTDsuc}}
\\
\subfigure[Average number of successful MTDs.]{\label{metric2}\includegraphics[width=0.44\textwidth]{meanMTD}}
\mbox{\hspace{0.5cm}}
\subfigure[Probability of successful channel utilization.]{ \label{metric3}\includegraphics[width=0.44\textwidth]{channelulti}}
\caption{Available resources of the relaying phase, $TW$, versus the considered metrics with different number of channels $N$.}\label{fig_valid}
\end{figure}
\fi
Fig.~\ref{fig_valid} plots the available resources of the relaying phase versus the three considered metrics for the two scheduling schemes with different number of channels of the aggregation phase. Note that we define the available resources as the multiplication of transmission time and bandwidth for the relaying phase in this paper. From all the subfigures, we can see that our derived analytical results provide a good approximation to the exact results. However, unlike Monte Carlo simulations which requires a huge computation time (i.e., one simulation point in Fig.~\ref{metric1} took approximately 14.5 days to generate using Matlab running on a Windows 7 machine with Intel Core i7-4790 processor at 3.6 GHz and 16 GB RAM), our derived analytical expressions allow the metrics to be computed quickly, especially for the RRS scheme.

According to Fig.~\ref{fig_valid}, as the available resources of the relaying phase $TW$ increase, all the metrics increase, which implies that the provision of more resources to the relaying phase can improve the performance. This is due to the fact that adding more resources of the relaying phase can increase the relaying phase's channel success probability for both schemes. From Fig.~\ref{metric3}, we observe that increasing number of channels of the aggregation phase degrades the probability of successful channel utilization. But, in terms of the MTD success probability and average number of successful MTDs, providing more channels for the aggregation phase does not always improve the performance as there are several intersection points in Figs.~\ref{metric1} and~\ref{metric2}. A more detailed discussion about the impact of $N$ will be provided in the following subsection. Additionally, Figs.~\ref{metric1} and~\ref{metric2} almost have the same curve shape. We have tested other system parameter sets and find that $\bar{K}_{\textrm{suc}}\approx \bar{m}\times\psuc$.

By comparing the RRS and CRS schemes, we observe that the RSS scheme performs worse than the CRS scheme when $N$ is small and $TW$ is large. This is because, when more resources are provided to the relaying phase, the overall performance is mainly determined by the channel success probability of the aggregation phase. It has been shown in Fig.~\ref{fig_channeloutage1} that $\apouta^{c}$ for the CRS outperforms than $\apouta^{r}$ for the RRS when the number of channels $N$ is less. Since the complexity of the RRS is much simpler than the CRS and their performance has little difference in most cases, in the following sections, we focus on the RRS scheme and study the effect of the number of channels and the aggregator's density. We also drop the superscript $r$. Note that the observed trends also hold for the CRS.

\subsection{Effect of Number of Channels}\label{sec:result:N}
Fig.~\ref{fig:effectN} plots the number of channels versus probability of successful channel utilization $\achannel$ and MTD success probability $\psuc$ with different $TW$, respectively. As indicated in the above subsection where $\bar{K}_{\textrm{suc}}\approx \bar{m}\times\psuc$, the curves for the MTD success probability are almost the same as average number of successful MTDs; hence, we do not show its figure for the sake of brevity. Fig.~\ref{fig:effectN_channel} shows that as the number of channels of the aggregation phase increases, the probability of successful channel utilization always decreases, which has also been observed in Fig.~\ref{metric3}. This is because as more channels become available for the aggregation phase, there is lesser chance for a channel to be occupied by a MTD. Consequently, the channel utilization performance is deteriorated. In addition, we find that the worst channel performance is when the resources for the relaying phase are very small, since $\achannel$ drops a lot for $TW=50$ compared to the curve for $TW=300$.
\ifCLASSOPTIONpeerreview
\begin{figure}
\centering
\subfigure[Probability of successful channel utilization.]{ \label{fig:effectN_channel}\includegraphics[width=0.44\textwidth]{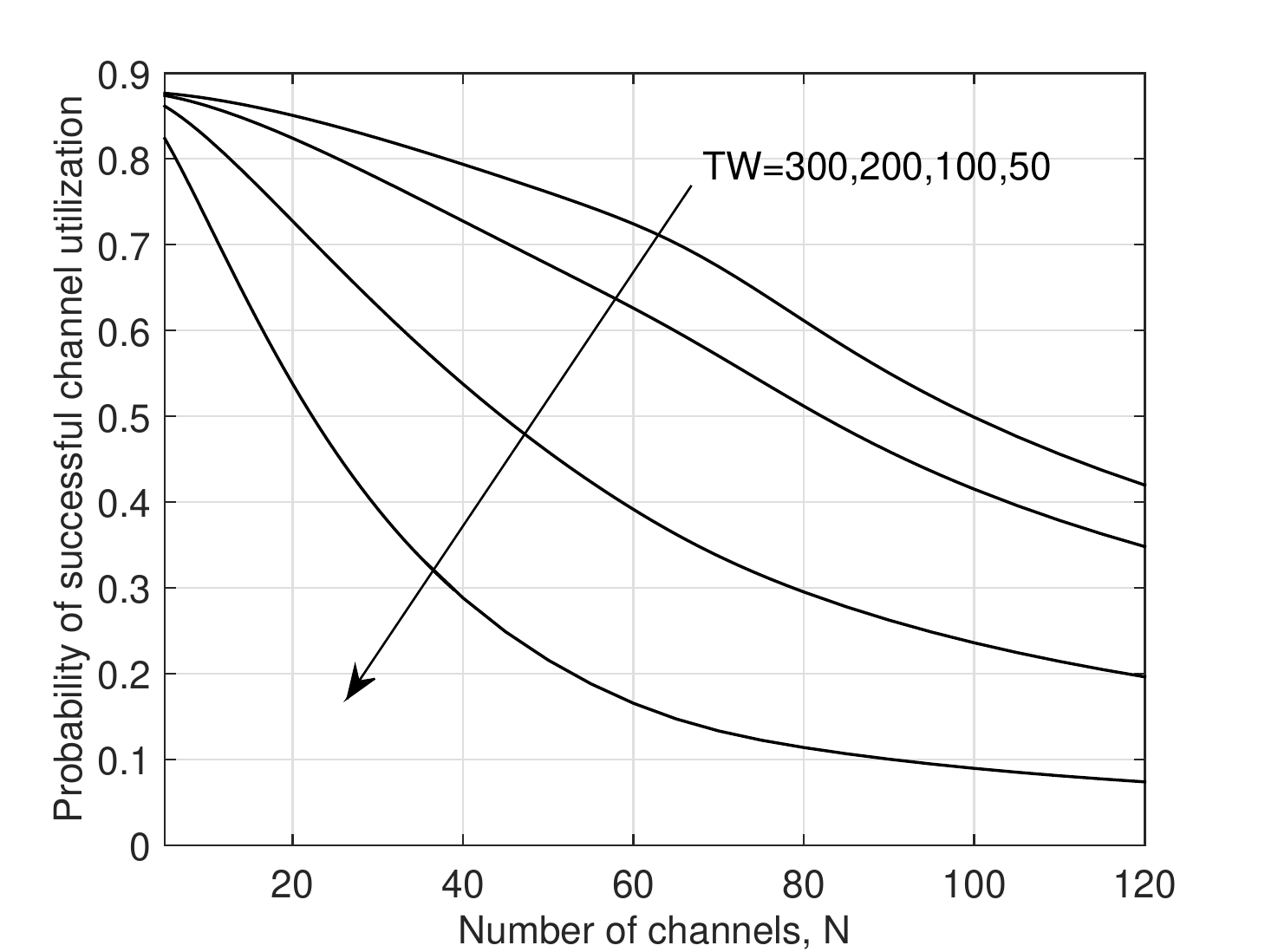}}
\mbox{\hspace{0.5cm}}
\subfigure[MTD success probability.]{\label{fig:effectN_node}\includegraphics[width=0.44\textwidth]{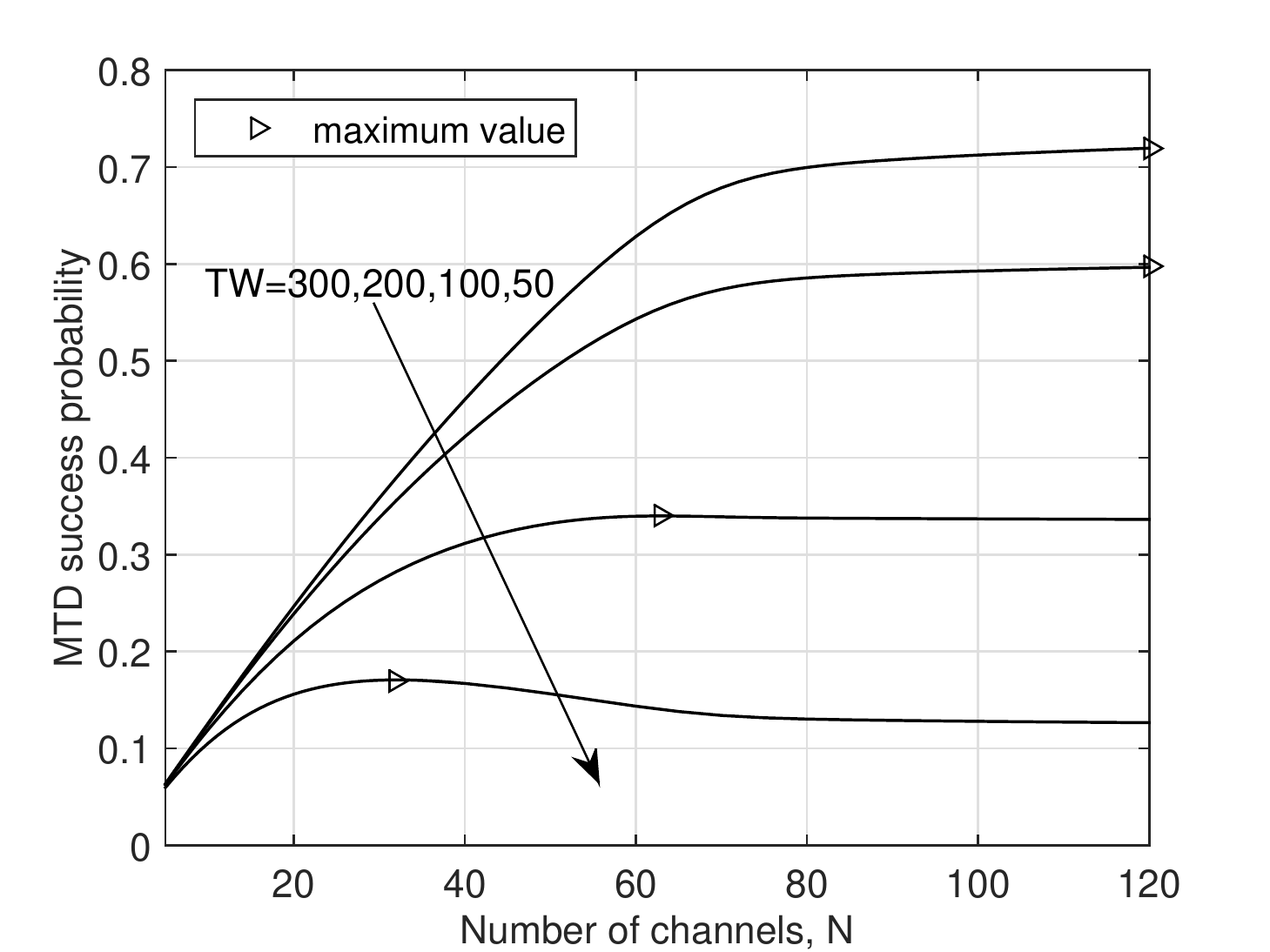}}
\caption{Number of channels, $N$, versus (a) probability of successful channel utilization and (b) MTD success probability.}\label{fig:effectN}
\end{figure}
\else
\begin{figure}
\centering
\subfigure[Probability of successful channel utilization.]{ \label{fig:effectN_channel}\includegraphics[width=0.41\textwidth]{effect_N_channel}}
\subfigure[MTD success probability.]{\label{fig:effectN_node}\includegraphics[width=0.41\textwidth]{effect_N_MTD}}
\caption{Number of channels, $N$, versus (a) probability of successful channel utilization and (b) MTD success probability.}\label{fig:effectN}
\end{figure}
\fi

In Fig.~\ref{fig:effectN_node}, we mark the maximum MTD success probability for each curve within the considered range. It shows that, when $TW$ becomes large, the MTD success probability increases with the increasing number of channels. After a certain value of $N$, the curves are almost flat. For example, under $TW=300$, by increasing $N$ from $80$ to $100$, $\psuc$ only increases from 0.6997 to 0.7195 while $\achannel$ decreases a lot (i.e., from 0.6113 to 0.4197). This manifests that the performance gain achieved by adding more channels is very little and also the channels are greatly underused. When the available resources for the relaying phase are small (i.e., for the curves with $TW=100$ or $50$), the MTD success probability increases at first and then decreases as $N$ increases. This can be explained as follows. From the average perspective, according to~\eqref{eq:summary_channel1}, the MTD success probability is determined by three factors. As $N$ increases, the average non-drop probability of a MTD $\apndrop$ increases at first and then stays as 1. The channel success probability of the aggregation phase, $\apouta$, almost keeps as a constant at first since each channel is fully occupied. $\apouta$ then begins to increase because the channel occupation probability decreases, which reduces the interference experienced at each channel. As for the channel success probability of the relaying phase $\apoutb$, it is always decreasing. This is because, as $N$ increases, more data is required to be transmitted by data aggregator thereby degrading the channel performance. For the first two factors, they are independent of the available resources of the relaying phase. However, $\apoutb$ strongly relies on the relaying phase's resources. We find that when the available resources of the relaying phase are scarce (i.e., $TW$ is small), $\apoutb$ drops a lot as $N$ increases, compared the case where $TW$ is large. The interplay of these factors results in the trends as shown in Fig.~\ref{fig:effectN_node}.

\ifCLASSOPTIONpeerreview
\begin{figure}
\centering
\subfigure[Different path-loss exponent.]{\label{fig:effectN_alpha}\includegraphics[width=0.44\textwidth]{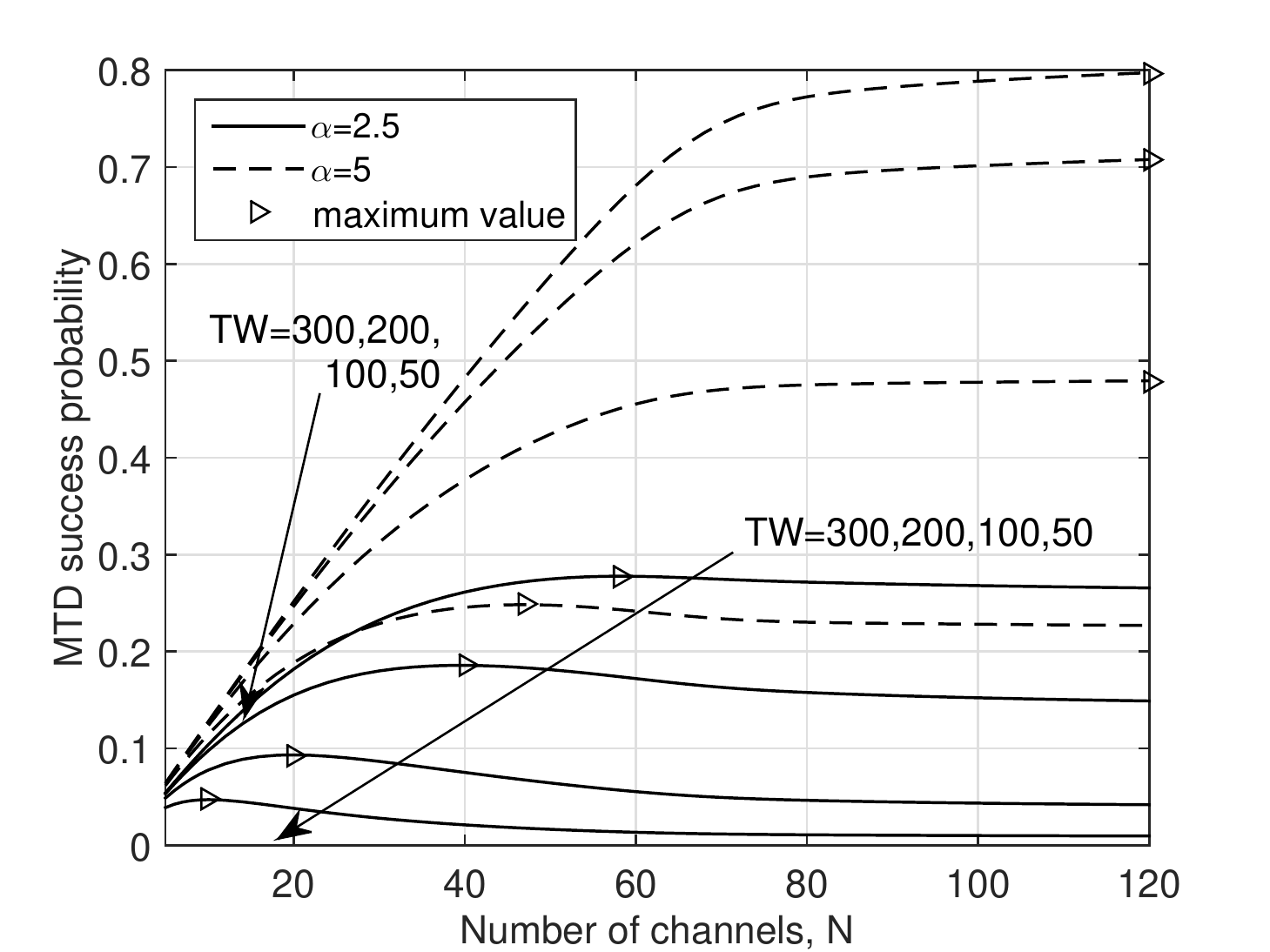}}
\mbox{\hspace{0.5cm}}
\subfigure[Different aggregator's density.]{ \label{fig:effectN_density}\includegraphics[width=0.44\textwidth]{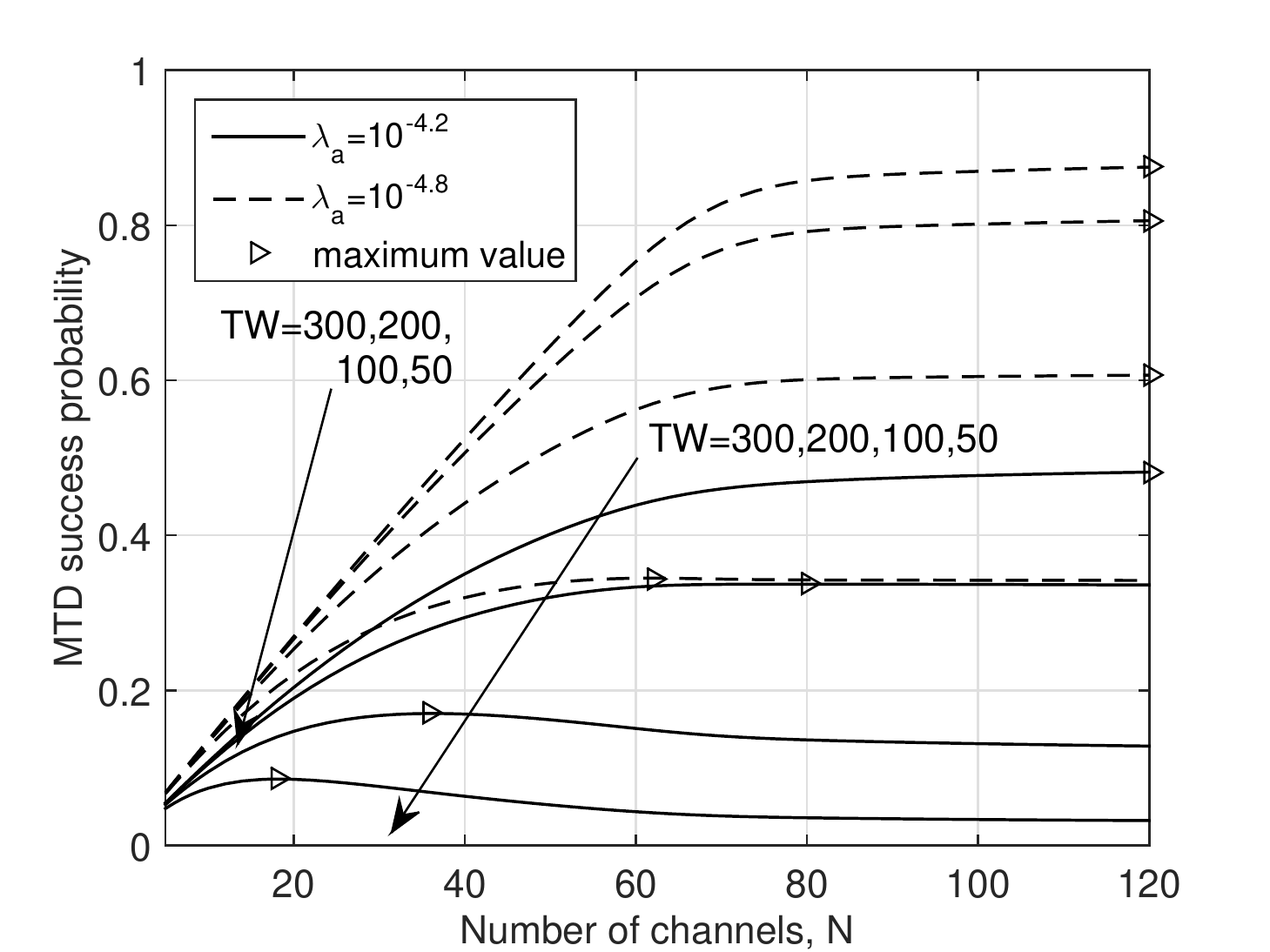}}
\caption{Number of channels, $N$, versus the MTD success probability with (a) different path-loss exponent and (b) different aggregator's density.}\label{fig:effectN_diff}
\end{figure}
\else
\begin{figure}
\centering
\subfigure[Different path-loss exponent.]{\label{fig:effectN_alpha}\includegraphics[width=0.41\textwidth]{effect_N_MTD_alpha}}
\subfigure[Different aggregator's density.]{ \label{fig:effectN_density}\includegraphics[width=0.41\textwidth]{effect_N_MTD_density}}
\caption{Number of channels, $N$, versus the MTD success probability with (a) different path-loss exponent and (b) different aggregator's density.}\label{fig:effectN_diff}
\end{figure}
\fi
We have also examined the impact of $N$ under different path-loss exponent and different aggregator's density as shown in Fig.~\ref{fig:effectN_diff}. From these figures, we observe that when the path-loss exponent is small or the aggregator's density is large, such non-monotonic trend is more likely to occur. This is because, when $N$ gets larger, under these scenarios, the interference experienced at the relaying phase is more severe, which causes $\apoutb$ to decrease faster. Thus, we see that increasing the number of channels of the aggregation phase does not always benefit to the mMTC, especially for the case when the channel performance of the relaying phase is poor.

\subsection{Effect of Density of MTDs and Aggregators}\label{sec:effect:dense}
In this subsection, we analyze the effect of MTDs and aggregator's density. As indicated in Section~\ref{subsec:system:network}, the density of MTDs is related to the density of aggregators by $\lambda_{m}=\bar{m}\lambda_{a}$. We fix $\bar{m}$ here. Fig.~\ref{fig:effectdenisty_channel} and Fig.~\ref{fig:effectdenisty_node} plot the aggregator's density versus the probability of successful channel utilization and MTD success probability with different number of channels, respectively. As illustrated in these subfigures, as the aggregator's density (equivalently, the MTD's density) increases, both the probability of successful channel utilization and MTD success probability decrease, which implies that the performance for a single MTD, aggregator and channel worsens.

This can be explained as follows. As indicated in Remark 1, the channel occupation probability and non-drop probability of a MTD keep the same regardless of aggregator's density. Thus, these two metrics only rely on the channel success probability of the aggregation and relaying phases (i.e., $\apouta$ and $\apoutb$). When $\lambda_a$ increases, more MTDs are brought into the network and more interference will be generated on each channel, which results in decreasing $\apouta$. With respect to $\apoutb$, it decreases at first and then increases. Initially, because more aggregators will be associated to the same BS, less resources are allocated to each aggregator and $\apoutb$ decreases. After a certain point (i.e., when the performance for $\apouta$ is very poor), less MTDs can be successfully connected to the aggregator. Hence, $\apoutb$ begins to increase since the data transmitted by an aggregator is significantly reduced. Note that, when aggregators are very dense, the performance of $\apouta$ is so severe such that the overall performance is mainly governed by $\apouta$. Consequently, the general trends for $\psuc$ (equivalently, $\bar{K}_{\textrm{suc}}$) and $\achannel$ are decreasing.
\ifCLASSOPTIONpeerreview
\begin{figure}
\centering
\subfigure[Probability of successful channel utilization.]{ \label{fig:effectdenisty_channel}\includegraphics[width=0.44\textwidth]{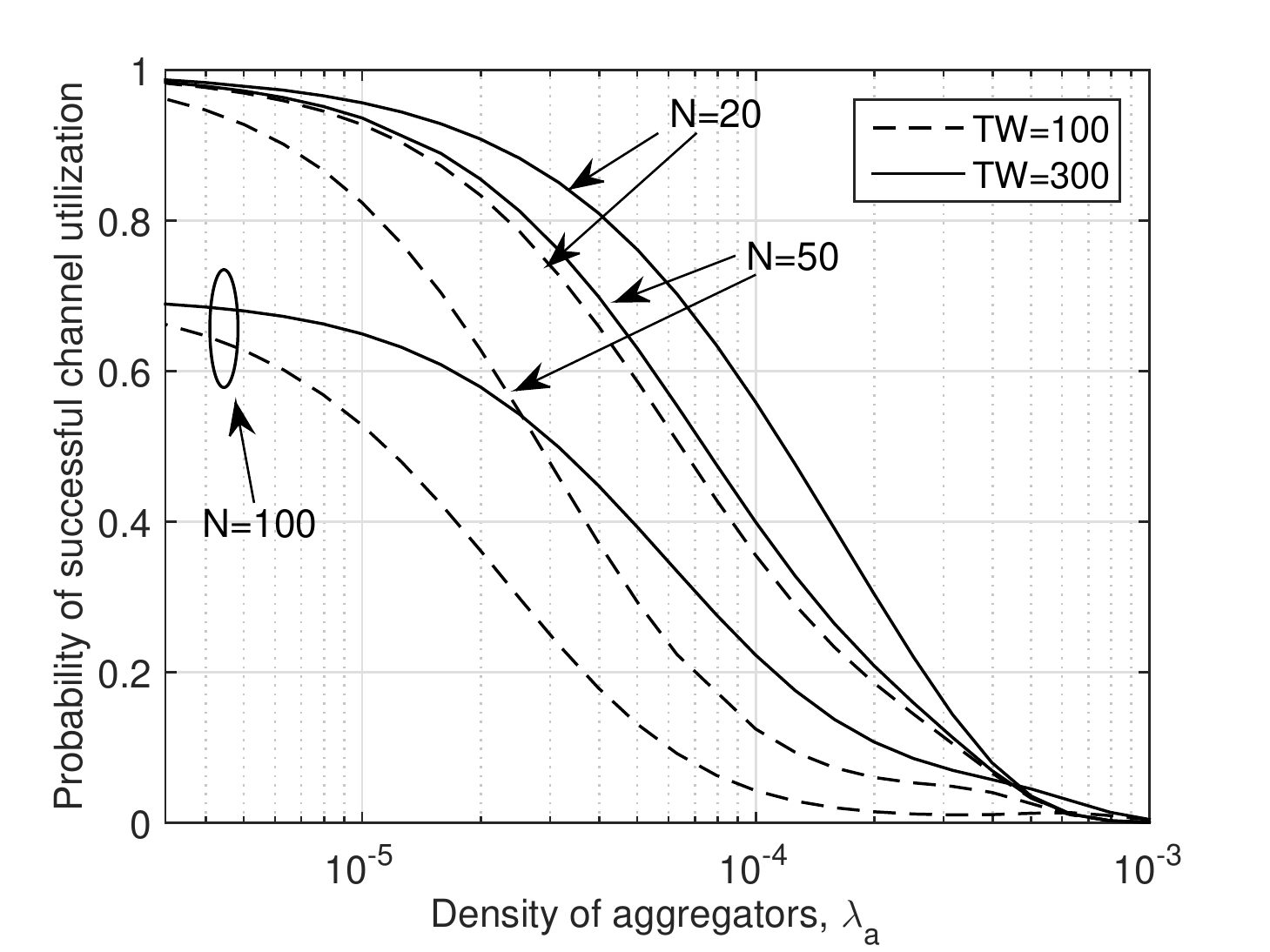}}
\mbox{\hspace{0.5cm}}
\subfigure[MTD success probability.]{\label{fig:effectdenisty_node}\includegraphics[width=0.44\textwidth]{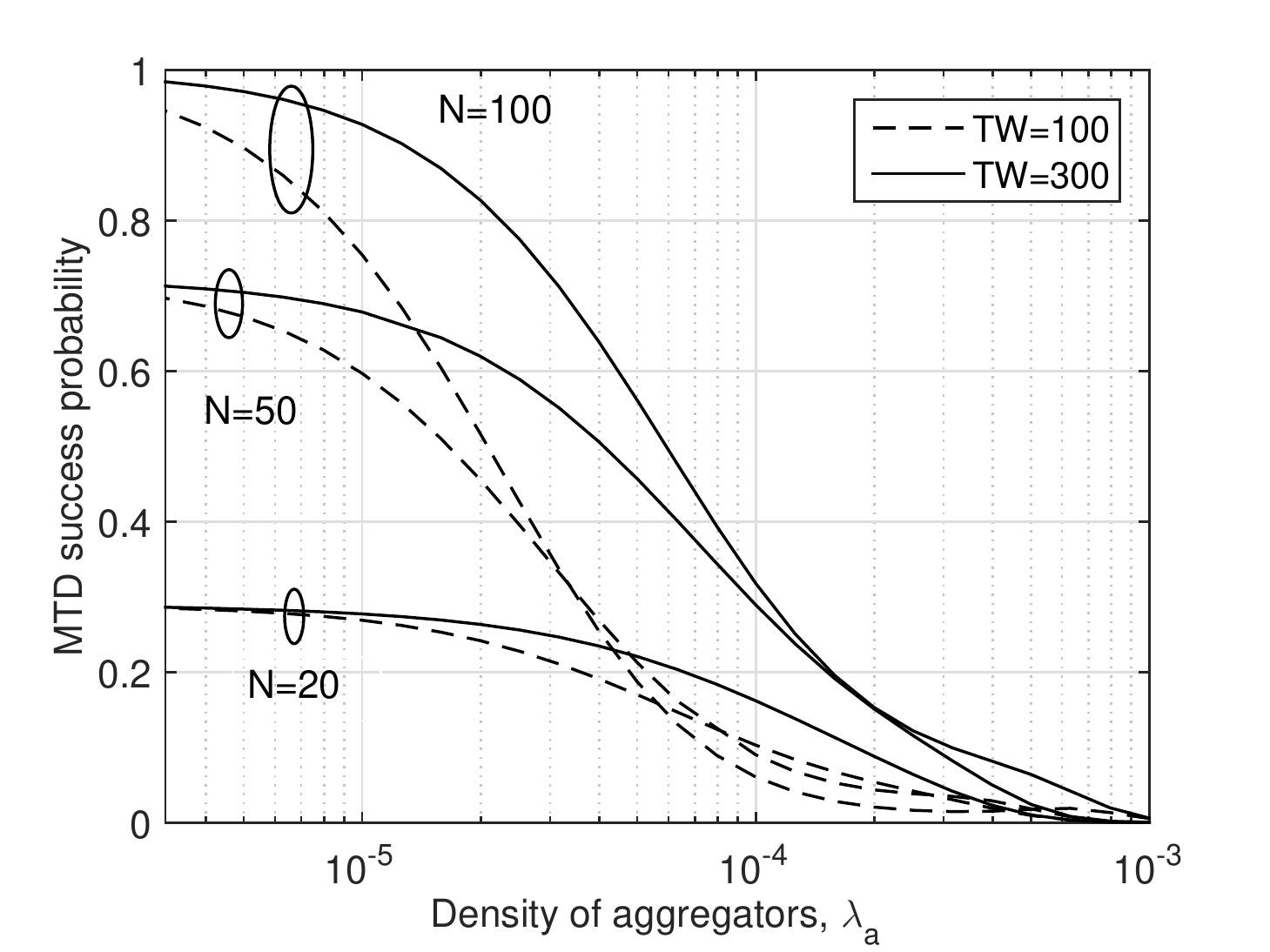}}

\subfigure[Fixed average number of MTDs per aggregator.]{\label{fig:effectdenisty_nodeAll1}\includegraphics[width=0.44\textwidth]{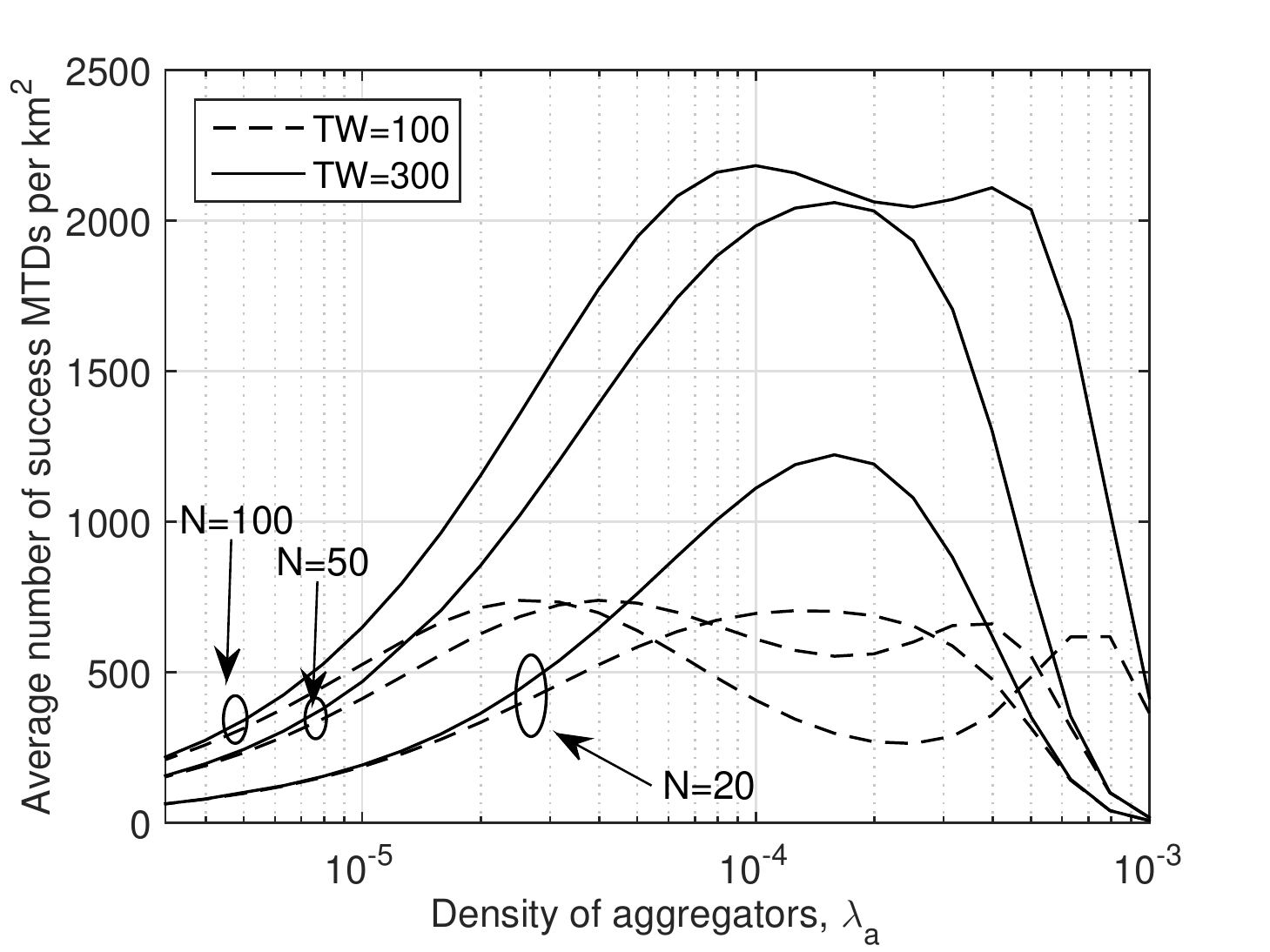}}

\caption{Aggregator's density, $\lambda_a$, versus (a) probability of successful channel utilization, (b) MTD success probability and (c) average number of successful MTDs per km$^2$.}\label{fig:effectdensity}
\end{figure}
\else
\begin{figure}
\centering
\subfigure[Probability of successful channel utilization.]{ \label{fig:effectdenisty_channel}\includegraphics[width=0.41\textwidth]{effect_density_channel}}
\subfigure[MTD success probability.]{\label{fig:effectdenisty_node}\includegraphics[width=0.41\textwidth]{fig/effect_density_MTD}}

\subfigure[Fixed average number of MTDs per aggregator.]{\label{fig:effectdenisty_nodeAll1}\includegraphics[width=0.41\textwidth]{effect_density_nodeAll}}

\caption{Aggregator's density, $\lambda_a$, versus (a) probability of successful channel utilization, (b) MTD success probability and (c) average number of successful MTDs per km$^2$.}\label{fig:effectdensity}
\end{figure}
\vspace{-2mm}
\fi

We also examine the average number of successful MTDs per km$^2$. Fig.~\ref{fig:effectdenisty_nodeAll1} plots the aggregator's density versus the average number of successful MTDs per km$^2$. It is shown that, when the number of channels of the aggregation phase is not that many and the relaying phase's available resources $TW$ are relatively large, the average number of successful MTDs per km$^2$ increases at the first and then decreases with the increasing aggregator's density. General speaking, increasing the aggregator's density first improves the average number of successful MTDs per km$^2$, since at the low density region the performance is mainly impacted by $\lambda_a$. When the aggregators are too dense such that increasing $\lambda_a$ cannot compensate the loss at each aggregator, the overall performance deteriorates. In summary, properly choosing the aggregator's density can enhance the performance in terms of the average number of successful MTDs per km$^2$.

\section{Conclusions}\label{sec:conclusion}
In this work, we have proposed a tractable analytical framework to study a two-phase cellular-base mMTC system, where data from MTDs is first aggregated at the data aggregator and then relayed to BSs. The aggregators also schedule the limited resources among their associated MTDs. Two scheduling schemes are considered, i.e., the RRS and CRS schemes. We derived the SIR distribution for each phase and we also investigated three metrics to evaluate the overall mMTC performance: the MTD success probability, average number of successful MTDs and probability of successful channel utilization. The accuracy of the derived results is confirmed by simulations. Our results showed that, compared to the CRS scheme, the RRS scheme can achieve almost the same overall mMTC performance as long as the aggregation aggregation's recourse is not very limited. In addition, providing more numbers of channels at the aggregation phase can sometimes degrade overall mMTC performance, especially when the relaying phase experiences severe interference. \redcom{Future work can consider the non-orthogonal multiple access for MTDs and aggregators, as envisaged in~\cite{Mahyar-2017}.}
\section*{Appendix A: Proof of Lemmas 1 and 2}\label{appendix:prob}
\begin{proof}
The channel occupation probability is independent of the resource scheduling scheme and it only relies on the number of MTDs requiring data transmission. Given that there are $K$ MTDs requiring data transmission for a typical aggregator, the available $N$ channels will be assigned to $\textrm{min}\left\{K,N\right\}$ MTDs and the remaining $K-\textrm{min}\left\{K,N\right\}$ MTDs are dropped. Hence, the conditional channel occupation probability is given by $p_O=\frac{\textrm{min}\left\{K,N\right\}}{N}$. After averaging the distribution of $K$ (i.e., a Poisson distribution with mean $\bar{m}$), we have the average channel occupation probability as
\ifCLASSOPTIONpeerreview
\begin{align}
 \bar{p}_O=&\mathbb{E}_{K}\left\{\frac{\textrm{min}\left\{K,N\right\}}{N} \right\}=\sum_{k=0}^{N}\frac{k}{N}\frac{\bar{m}^k\exp(-\bar{m})}{k!}+\sum_{k=N+1}^{\infty}\frac{\bar{m}^k\exp(-\bar{m})}{k!}\nonumber\\
 =&1-\frac{\Gamma\left[1+N,\bar{m}\right]}{\Gamma\left[1+N\right]}+\frac{\bar{m}\Gamma[1+N,\bar{m}]-\exp(-\bar{m})\bar{m}^{N+1}}{N^2\Gamma[N]}.
\end{align}
\else
\begin{align}
 \bar{p}_O=&\mathbb{E}_{K}\left\{\frac{\textrm{min}\left\{K,N\right\}}{N} \right\}\nonumber\\
 =&\sum_{k=0}^{N}\frac{k}{N}\frac{\bar{m}^k\exp(-\bar{m})}{k!}+\sum_{k=N+1}^{\infty}\frac{\bar{m}^k\exp(-\bar{m})}{k!}\nonumber\\
 =&1-\frac{\Gamma\left[1+N,\bar{m}\right]}{\Gamma\left[1+N\right]}+\frac{\bar{m}\Gamma[1+N,\bar{m}]-\exp(-\bar{m})\bar{m}^{N+1}}{N^2\Gamma[N]}.
\end{align}
\fi

Likewise, under the RRS, by conditioning on the number of MTDs attached to a typical aggregator, we can have the conditional non-drop probability for a MTD given by $\pndrop=\frac{N}{\textrm{max}\left\{K,N\right\}}$. With regards to the CRS, when $K\leq N$, a MTD will always be scheduled a channel (i.e., $\pndrop=1$). For the case when $K>N$, since the fading gains among these $K$ MTDs are identically and independently distributed, the probability that a MTD's fading gain can be ranked as the top $K$ best fading gains among $N$ MTDs is $\frac{K}{N}$ (equivalently, $\pndrop=\frac{K}{N}$). Note that the conditional non-drop probability is exactly the same as the one for the RRS.

We then average this conditional probability over the distribution of $K$ and obtain the average non-drop probability as
\ifCLASSOPTIONpeerreview
\begin{align}
\apndrop=&\mathbb{E}_{K}\left\{\frac{N}{\textrm{max}\left\{K,N\right\}}\right\}=\sum_{k=0}^{N}\frac{\bar{m}^k\exp(-\bar{m})}{k!}+\sum_{k=N+1}^{\infty}\frac{N}{k}\frac{\bar{m}^k\exp(-\bar{m})}{k!}\nonumber\\
 =&\frac{\Gamma[1+N,\bar{m}]}{\Gamma[1+N]}+\exp(-\bar{m})\bar{m}^{1+N}N\frac{\,_2F_2\left[\left\{1,1+N\right\},\left\{2+N,2+N\right\},\bar{m}\right]}{(N+1)(N+1)!}.
\end{align}
\else
\begin{align}
&\apndrop=\mathbb{E}_{K}\left\{\frac{N}{\textrm{max}\left\{K,N\right\}}\right\}\nonumber\\
&=\sum_{k=0}^{N}\frac{\bar{m}^k\exp(-\bar{m})}{k!}+\sum_{k=N+1}^{\infty}\frac{N}{k}\frac{\bar{m}^k\exp(-\bar{m})}{k!}\nonumber\\
 &=\frac{\Gamma[1+N,\bar{m}]}{\Gamma[1+N]}\nonumber\\
 &+\exp(-\bar{m})\bar{m}^{1+N}N\frac{\,_2F_2\left[\left\{1,1+N\right\},\left\{2+N,2+N\right\},\bar{m}\right]}{(N+1)(N+1)!}.
\end{align}
\fi

\end{proof}

\section*{Appendix B: Proof of Theorem~\ref{theo:channeloutage1}}\label{appendix:outage1}
\begin{proof}
For a certain channel of a typical aggregator, given it is occupied by a MTD, the instantaneous SIR is given by
\begin{align}
\textsf{SIR}_1=\frac{h}{\sum_{x\subset\Phi_{\textrm{MTD}}^{\redcom{\textrm{served}}}}g r_s^{\alpha}x^{-\alpha}},
\end{align}
\noindent where $x$ denotes both the location and the interfering MTD which occupies the certain channel, $r_s$ is the distance between the MTD and its serving \redcom{aggregator}, $h$ and $g$ are the fading power gain on the desired link and interfering link which follow a gamma distribution with shape parameter $m_1$ and an exponential distribution, respectively.

Since $\Phi_{\textrm{MTD}}^{\redcom{\textrm{served}}}$ is the HPPP with density $\bar{p}_O\adensity$, using the Campbell theorem~\cite{Haenggi-2012}, we obtain the average channel success probability at the typical aggregator
\begin{align}
\apouta^{r}=&\Pr\left(\textsf{SIR}_1\geq\gamma_1\right)=\Pr\left(\frac{h}{\sum_{x\subset\Phi_{\textrm{MTD}}^{\redcom{\textrm{served}}}}g r_s^{\alpha}x^{-\alpha}}\geq\gamma_1\right)\nonumber\\
=&\sum\limits_{t=0}\limits^{m_1-1}\frac{(-s)^{t}}{t!}\left.\frac{d^t}{d s^t}\mathcal{M}_{I_{1}}(s)\right|_{s=m_1\gamma_1},
\end{align}
\noindent where
\begin{align}\label{eq:MGFintf1}
 \mathcal{M}_{I_{1}}(s)=&\mathbb{E}_{\Phi_{\textrm{MTD}}^{\redcom{\textrm{served}}},g,r_s}\left\{\exp\left(-s\sum_{x\subset\Phi_{\textrm{MTD}}^{\redcom{\textrm{served}}}}g r_s^{\alpha}x^{-\alpha}\right)\right\}\nonumber\\
 =&\exp\left(-\bar{p}_O\adensity\pi\mathbb{E}_{r_s}\!\left\{r_s^2\right\}\mathbb{E}_g\left\{g^{\frac{2}{\alpha}}\right\}\Gamma\!\left[1-\frac{2}{\alpha}\right]s^{\frac{2}{\alpha}}\right)\nonumber\\
 =&\exp\left(-\bar{p}_O\adensity\pi\frac{R_s^2}{2}\Gamma\!\left[1+\frac{2}{\alpha}\right]\Gamma\!\left[1-\frac{2}{\alpha}\right]s^{\frac{2}{\alpha}}\right).
\end{align}
\end{proof}

\section*{Appendix C: Proof of Proposition~\ref{theo:channeloutage1p}}\label{appendix:outage1p}
\begin{proof}
The channel success probability experienced at an aggregator, under the CRS, relies strongly on the number of MTDs within this aggregator's serving zone. Let $\pouta^{c}(K)$ denote the conditional channel success probability given $K$ MTDs requiring data transmission for a typical aggregator.

When $K\leq N$, given a certain channel is occupied by a MTD, the situation is the same as for the RRS. Hence, $\pouta^{c}(K)=\apouta^{r}$ when $K\leq N$.

When $K>N$, only $N$ MTDs with better fading gains will be allowed to occupy the channels among $K$ MTDs. For a typical aggregator which is assumed to be located at the origin, given a certain channel is scheduled to a MTD with the $i$-th best fading gain (i.e., $h_{(i),K}$), the channels success probability is $\pouta^{c}(K,i)=\Pr\left(\frac{h_{(i),K}}{I_1}\geq\gamma_1\right)$. The distribution of $h_{(i),K}$ is
\ifCLASSOPTIONpeerreview
\begin{align}
f(h_{(i),K})\!=\!\frac{K!}{(i\!-\!1)!(K\!-\! i)!}\!\left(\frac{\Gamma\![m_1,0,m_1h_{(i),k}]}{\Gamma\![m_1]}\right)^{\!\!i-1}\!\!\left(\frac{\Gamma\![m_1,m_1h_{(i),k}]}{\Gamma\![m_1]}\right)^{\!\!K-i}
\frac{m_1^{m_1}h_{(i),k}^{m_1-1}\exp(-m_1h_{(i),k})}{\Gamma\![m_1]}.
\end{align}
\else
\begin{align}
&f(h_{(i),K})\!=\!\frac{K!}{(i\!-\!1)!(K\!-\! i)!}\!\left(\frac{\Gamma\![m_1,0,m_1h_{(i),k}]}{\Gamma\![m_1]}\right)^{\!\!i-1}\nonumber\\
&\times\left(\frac{\Gamma\![m_1,m_1h_{(i),k}]}{\Gamma\![m_1]}\right)^{\!\!K-i}
\frac{m_1^{m_1}h_{(i),k}^{m_1-1}\exp(-m_1h_{(i),k})}{\Gamma\![m_1]}.
\end{align}
\fi
Note that this distribution is obtained using the order statistical theory~\cite{David-1970} and the fact that the original distribution of fading gain $h$ is a gamma distribution. Unlike the RRS where the fading gain has a nice distribution allowing us to easily compute the channel success probability, the distribution of $h_{(i),K}$ is very complicated. Instead, we use the Gil-Pelaez inversion theorem to work out the channel success probability.

According to the Gil-Pelaez inversion theorem~\cite{Pelaez-1951,Renzo-2014}, we can rewrite the conditional channel success probability as
\ifCLASSOPTIONpeerreview
\begin{align}
&\pouta^{c}(K,i)=\Pr\left(I_1\leq\frac{h_{(i),K}}{\gamma_1}\right)\nonumber\\
&=\frac{1}{2}-\frac{1}{\pi}\int_{0}^{\infty}\frac{1}{w}\textrm{Im}\left\{\mathcal{M}_{I_1}(-\ii w)\mathbb{E}_{h_{(i),K}}\left\{\exp\left(-\ii w\frac{h_{(i),K}}{\gamma_1}\right)\right\}\right\}\textup{d}w \nonumber\\
&\approx\frac{1}{2}\!-\!\frac{1}{\pi}\int_0^{\infty}\!\!\frac{1}{w}\textrm{Im}\!\left\{\exp\!\left(\!-\bar{p}_O\adensity\pi\frac{R_s^2}{2}\Gamma\!\left[\frac{\alpha\!+\!2}{\alpha}\right]\Gamma\!\left[\frac{\alpha\!-\!2}{\alpha}\right]
 \!\left(-\ii w\right)^{\!\frac{2}{\alpha}}\right)\exp\!\left(-\ii w\frac{\mathbb{E}_{h_{(i),k}}\left\{h_{(i),k}\right\}}{\gamma_1}\!\right)\!\right\}\textup{d}w,
\end{align}
\else
\begin{align}
&\pouta^{c}(K,i)=\Pr\left(I_1\leq\frac{h_{(i),K}}{\gamma_1}\right)\nonumber\\
&=\frac{1}{2}-\frac{1}{\pi}\nonumber\\
&\times\int_{0}^{\infty}\!\frac{1}{w}\textrm{Im}\left\{\mathcal{M}_{I_1}(-\ii w)\mathbb{E}_{h_{(i),K}}\left\{\exp\left(-\ii w\frac{h_{(i),K}}{\gamma_1}\right)\right\}\!\right\}\textup{d}w \nonumber\\
&\approx\frac{1}{2}\!-\!\frac{1}{\pi}\int_0^{\infty}\!\!\frac{1}{w}\textrm{Im}\!\left\{\exp\!\left(-\ii w\frac{\mathbb{E}_{h_{(i),k}}\left\{h_{(i),k}\right\}}{\gamma_1}\!\right)\!\right.\nonumber\\
&\left.\times\exp\!\left(\!-\bar{p}_O\adensity\pi\frac{R_s^2}{2}\Gamma\!\left[\frac{\alpha\!+\!2}{\alpha}\right]\Gamma\!\left[\frac{\alpha\!-\!2}{\alpha}\right]
 \!\left(-\ii w\right)^{\!\frac{2}{\alpha}}\right)\right\}\textup{d}w,
\end{align}
\fi
\noindent where the second step comes from the Gil-Pelaez inversion theorem, and the last step leverages the $\mathcal{M}_{I_1}(s)$ derived in~\eqref{eq:MGFintf1}. Note that the approximation in the last step comes from the application of Jensen inequality for the term $\mathbb{E}_{h_{(i),K}}\left\{\exp\left(-\ii w\frac{h_{(i),K}}{\gamma_1}\right)\right\}$ (i.e., $\mathbb{E}_{h_{(i),K}}\left\{\exp\left(-\ii w\frac{h_{(i),K}}{\gamma_1}\right)\right\}\approx \exp\!\left(-\ii w\frac{\mathbb{E}_{h_{(i),k}}\left\{h_{(i),k}\right\}}{\gamma_1}\!\right)$). Such an approximation reduces the computation complexity while maintaining the accuracy of the final results as shown in Fig.~\ref{fig_channeloutage1}.

For a certain channel of a typical aggregator, the fading on this channel can experience from the best fading until the $N$-th best fading with equal probability. Hence, we need to average the above channel success probability and obtain the average channel success probability on the certain channel, given $K$ MTDs, as
\begin{align}
&\pouta^{c}(K)\!\approx\!\frac{\sum\limits_{i=1}^{N}\!\left(\frac{1}{2}\!-\!\frac{1}{\pi}\int_0^{\infty}\!\!\frac{1}{w}\textrm{Im}\!\left\{\!\exp\!\left(\!-\bar{p}_O\adensity\pi\frac{R_s^2}{2}\Gamma\!\left[\frac{\alpha\!+\!2}{\alpha}\right]\Gamma\!\left[\frac{\alpha\!-\!2}{\alpha}\right]
 \!\left(-\ii w\right)^{\!\frac{2}{\alpha}}\right)\!\exp\!\left(\!-\ii w\frac{\mathbb{E}_{h_{(i),k}}\left\{h_{(i),k}\right\}}{\gamma_1}\!\right)\!\!\right\}\textup{d}w\!\!\right)}{N}.
\end{align}

Finally, after averaging $\pouta^{c}(K)$ over the distribution of the number of MTD requiring data transmission, we arrive at the result in Proposition~\ref{theo:channeloutage1p}.
\end{proof}

\section*{Appendix D: Proof of Corollary~\ref{coro:distribution1}}\label{appendix:distribution1}
\begin{proof}
The distribution of the number of active channels $K_1$ strongly relies on i) the number of MTDs $K$ requiring data transmission for the typical aggregator; and ii) whether or not each channel is experiencing channel outage. It is important to note that the latter event is not independent for each channel and the interference on each channel is correlated. Although within the same serving zone the served MTDs will generate different interference on different channels, they all gather around the aggregator such that the interference from a cluster generated on the channel set $\mathcal{N}$ is spatially-correlated. Solving this spatial-correlation is still a challenging open problem. Instead, in this work, we ignore the correlation and adopt the independent interference assumption to work out the distribution. As will be shown in Section~\ref{sec:result}, the numerical results on the performance metrics derived using this approximated distribution are accurate.

By adopting the independent assumption, given the number of MTDs requiring data transmission $K$, the conditional distribution of $K_1$ is in fact following the binomial distribution. Thus the conditional PMF of $K_1$ is given by
\begin{align}\label{eq:distK1}
\Pr\!\,^{r}(K_1=k_1|K)\approx\begin{cases}
              0, & {K<k_1;} \\
            \binom{K}{k_1}(\apouta^{r})^{k_1}(1-\apouta^{r})^{K-k_1}, & {k_1\leq K\leq N;}\\
            \binom{N}{k_1}(\apouta^{r})^{k_1}(1-\apouta^{r})^{N-k_1}, & {K> N;}
             \end{cases}
\end{align}

After averaging the above conditional distribution, we obtain the distribution of the number of active channels for the typical aggregator as
\ifCLASSOPTIONpeerreview
\begin{align}
&\Pr\!\,^{r}(K_1=k_1)=\mathbb{E}_{K}\left\{\Pr\!\,^{r}(K_1=k_1|K)\right\} \nonumber\\
\approx&\sum_{k=k_1}^{N}\binom{k}{k_1}(\apouta^{r})^{k_1}(1-\apouta^{r})^{k-k_1}\frac{\bar{m}^k\exp(-\bar{m})}{k!}+\sum_{k=N+1}^{\infty}\binom{N}{k_1}(\apouta^{r})^{k_1}(1-\apouta^{r})^{k-k_1}\frac{\bar{m}^k\exp(-\bar{m})}{k!}.
\end{align}
\else
\begin{align}
&\Pr\!\,^{r}(K_1=k_1)=\mathbb{E}_{K}\left\{\Pr\!\,^{r}(K_1=k_1|K)\right\} \nonumber\\
\approx&\sum_{k=k_1}^{N}\binom{k}{k_1}(\apouta^{r})^{k_1}(1-\apouta^{r})^{k-k_1}\frac{\bar{m}^k\exp(-\bar{m})}{k!}\nonumber\\
&+\sum_{k=N+1}^{\infty}\binom{N}{k_1}(\apouta^{r})^{k_1}(1-\apouta^{r})^{k-k_1}\frac{\bar{m}^k\exp(-\bar{m})}{k!}.
\end{align}
\fi
\noindent Further simplifying the above equation, we arrive the result in Corollary~\ref{coro:distribution1}.
\end{proof}

\section*{Appendix E: Proof of Proposition~\ref{theo:channeloutage2}}\label{appendix:outage2}
\begin{proof}
In order to derive the average channel success probability for the relaying phase, let us first compute the conditional channel success. We assume that the typical BS is located at the origin. For an active aggregator that is associated with this typical BS, given the number of active channels for this aggregator $K_1$ and the total number of active aggregators associated with the BS $N_a$, the conditional channel success probability is given by
\ifCLASSOPTIONpeerreview
\begin{align}\label{eq:conditionoutage2}
\poutb(K_1,N_a)=\Pr\left(\frac{h'}{I_2}\geq2^{\frac{D K_1 N_a}{TW}}-1\right)=\sum\limits_{t=0}\limits^{m_2-1}\frac{(-s)^{t}}{t!}\left.\frac{d^t}{d s^t}\mathcal{M}_{I_{2}}(s)\right|_{s=m_2(2^{\frac{D K_1 N_a}{TW}}-1)}.
\end{align}
\else
\begin{align}\label{eq:conditionoutage2}
\poutb(K_1,N_a)&=\Pr\left(\frac{h'}{I_2}\geq2^{\frac{D K_1 N_a}{TW}}-1\right)\nonumber\\
&=\sum\limits_{t=0}\limits^{m_2-1}\frac{(-s)^{t}}{t!}\left.\frac{d^t}{d s^t}\mathcal{M}_{I_{2}}(s)\right|_{s=m_2\left(2^{\frac{D K_1 N_a}{TW}}-1\right)}.
\end{align}
\fi
\noindent where $h'$ is the fading gain on the desired link from aggregator to the typical BS which follows a gamma distribution with shape parameter $m_2$, $I_2=\sum_{y\subset\Phi_{a}^{\textrm{intf}}}g r_a^{\alpha}y^{-\alpha}$ is the aggregate interference from active aggregators occupying the same resource~\cite{Singh-2015}. Therein, $y$ denotes both the location and the interfering aggregators, and $r_a$ is the distance from the interfering aggregator to its associated BS where its PDF is $f(r_a)=2\pi \BSdensity r_a\exp(-2\pi\BSdensity r_a^2)$.

As we consider the orthogonal access for each aggregator within each BS, only one aggregator per cell can use the certain resource block and generate the interference to the typical BS. In general, the point process $\Phi_{a}^{\textrm{intf}}$ is a Poisson-Voronoi perturbed lattice~\cite{Blaszczyszyn-2015}. However, the consideration of such a point process leads to intractable results. Hence, similar to~\cite{Singh-2015,7136376}, we approximate the the location of interfering aggregators as a HPPP with density $(1-p_{\textrm{void}})\BSdensity$, where $p_{\textrm{void}}=\left(1+\frac{\adensity'}{3.5\BSdensity}\right)^{-3.5}$ is the void probability that accounts for the probability that BS has zero active aggregator~\cite{7248753}.

Based on the above approximation, the MGF of the interference at the relaying phase is given by
\ifCLASSOPTIONpeerreview
\begin{align}
\mathcal{M}_{I_2}(s)=&\mathbb{E}_{\Phi_{a}^{\textrm{intf}},g,r_a}\left\{\sum_{y\subset\Phi_{a}^{\textrm{intf}}}g r_a^{\alpha}y^{-\alpha}\right\}\approx\exp\left(-\mathbb{E}_{r_a,g}\left\{\int_{r_a}^{\infty}g r_a^{\alpha}r^{-\alpha}2\pi(1-p_{\textrm{void}})\BSdensity r \textup{d}r\right\}\right) \nonumber\\
=&\exp\left(-2\pi(1-p_{\textrm{void}})\BSdensity\int_{0}^{\infty}\int_{r_a}^{\infty}\frac{sr_a^{\alpha}r^{-\alpha}}{1+sr_a^{\alpha}r^{-\alpha}}r\textup{d}r f(r_a)\textup{d}r_a\right) \nonumber\\
=&\exp\left(-2\pi(1-p_{\textrm{void}})\BSdensity\int_{0}^{\infty}\int_{{\frac{1}{s}}^{\frac{1}{\alpha}}}^{\infty}\frac{s^{\frac{2}{\alpha}}z}{1+z^{\alpha}}r_a^{2}\textup{d}zf(r_a)\textup{d}r_a\right)\nonumber\\
=&\exp\left(-2\pi(1-p_{\textrm{void}})\BSdensity s^{\frac{2}{\alpha}}\frac{s^{1-\frac{\alpha}{2}}\,_2F_1\left[1,1-\frac{2}{\alpha},2-\frac{2}{\alpha},-s\right]}{\alpha-2}\mathbb{E}_{r_a}\left\{r_a^2\right\} \right) \nonumber\\
=&\exp\left(-2(1-p_{\textrm{void}})s\frac{\,_2F_1\left[1,1-\frac{2}{\alpha},2-\frac{2}{\alpha},-s\right]}{\alpha-2}\right),
\end{align}
\else
\begin{align}
&\mathcal{M}_{I_2}(s)=\mathbb{E}_{\Phi_{a}^{\textrm{intf}},g,r_a}\left\{\sum_{y\subset\Phi_{a}^{\textrm{intf}}}g r_a^{\alpha}y^{-\alpha}\right\}\nonumber\\
&\approx\exp\!\left(-\mathbb{E}_{r_a,g}\left\{\int_{r_a}^{\infty}g r_a^{\alpha}r^{-\alpha}2\pi(1-p_{\textrm{void}})\BSdensity r \textup{d}r\right\}\right) \nonumber\\
&=\exp\!\left(-2\pi(1\!-\!p_{\textrm{void}})\BSdensity\!\!\int_{0}^{\infty}\!\!\!\int_{r_a}^{\infty}\!\!\frac{sr_a^{\alpha}r^{-\alpha}}{1+sr_a^{\alpha}r^{-\alpha}}r\textup{d}r f(r_a)\textup{d}r_a\!\right) \nonumber\\
&=\exp\!\left(-2\pi(1\!-\!p_{\textrm{void}})\BSdensity\int_{0}^{\infty}\int_{{\frac{1}{s}}^{\frac{1}{\alpha}}}^{\infty}\frac{s^{\frac{2}{\alpha}}z}{1+z^{\alpha}}r_a^{2}\textup{d}zf(r_a)\textup{d}r_a\right)\nonumber\\
&=\exp\!\Bigg(-2\pi(1\!-\!p_{\textrm{void}})\BSdensity s^{\frac{2}{\alpha}}\mathbb{E}_{r_a}\left\{r_a^2\right\} \nonumber\\
&\quad\quad\quad\quad\times\frac{s^{1-\frac{\alpha}{2}}\,_2F_1\left[1,1-\frac{2}{\alpha},2-\frac{2}{\alpha},-s\right]}{\alpha-2}\Bigg) \nonumber\\
&=\exp\left(-2(1-p_{\textrm{void}})s\frac{\,_2F_1\left[1,1-\frac{2}{\alpha},2-\frac{2}{\alpha},-s\right]}{\alpha-2}\right),
\end{align}
\fi
\noindent where the second step is based on the approximation of $\Phi_{a}^{\textrm{intf}}$ being HPPP and then followed by the Campbell theorem~\cite{Haenggi-2012}. \redcom{Different from~\eqref{eq:MGFintf1} where the distance between the interfering MTD and the typical aggregator is in the range of $[0,\infty]$, in this case, $r$ ranges from $r_a$ to $\infty$ because of our considered aggregator association scheme. The aggregator is always associated to its nearest BS; hence the distance between an aggregator and its associated BS $r_a$ is always shorter than the distance between an aggregator and the interfered typical BS $r$. Otherwise, it violates the association scheme.} The third step comes from the fact that the fading gain on the interfering link follows an exponential distribution. The fourth step is based on the variable substitution, i.e., $z=rs^{\frac{1}{\alpha}}r_a$ and the last step comes from the fact that $\mathbb{E}_{r_a}\left\{r_a^2\right\}=\int_{0}^{\infty}r_a^22\pi \BSdensity r_a\exp(-2\pi\BSdensity r_a^2)\textup{d}r_a=\frac{1}{\pi\BSdensity}$.

The next step is to de-condition the above conditional channel success probability. The distribution of $K_1$ is already presented in Corollaries~\ref{coro:distribution1} and~\ref{coro:distribution1p} for different schemes. In terms of the distribution of $N_a$, there is no exact closed-form expression for this distribution. However, a simple approximation has been proposed in~\cite{Ferenc-2007}, which can provide sufficient accuracy for practical purposes~\cite{Zhong-2013} and has been widely adopted in the literature. The approximation for the distribution of the number of active aggregator associated with a BS is given by
\begin{align}\label{eq:Nadist}
\Pr\left(N_a=n_a\right)=\frac{3.5^{3.5}\Gamma\![n_a+3.5](\frac{\adensity'}{\BSdensity})^{n_a}}{\Gamma\![3.5]\Gamma\![n_a+1](3.5+\frac{\adensity'}{\BSdensity})^{n_a+3.5}}.
\end{align}

After averaging~\eqref{eq:conditionoutage2} over $N_a$ and $K_1$, we can arrive the result presented in Proposition~\ref{theo:channeloutage2}. Note that the extra term in denominator $\left(1-\Pr\left(N_a=0\right)\right)$ comes from the fact the derived average channel success probability is calculated for the link from the active aggregator to BS, which requires $K_1\geq 1$ and $N_a\geq1$.
\end{proof}

 \bibliographystyle{IEEEtran}

\begin{thebibliography}{10}
\providecommand{\url}[1]{#1}
\csname url@samestyle\endcsname
\providecommand{\newblock}{\relax}
\providecommand{\bibinfo}[2]{#2}
\providecommand{\BIBentrySTDinterwordspacing}{\spaceskip=0pt\relax}
\providecommand{\BIBentryALTinterwordstretchfactor}{4}
\providecommand{\BIBentryALTinterwordspacing}{\spaceskip=\fontdimen2\font plus
\BIBentryALTinterwordstretchfactor\fontdimen3\font minus
  \fontdimen4\font\relax}
\providecommand{\BIBforeignlanguage}[2]{{%
\expandafter\ifx\csname l@#1\endcsname\relax
\typeout{** WARNING: IEEEtran.bst: No hyphenation pattern has been}%
\typeout{** loaded for the language `#1'. Using the pattern for}%
\typeout{** the default language instead.}%
\else
\language=\csname l@#1\endcsname
\fi
#2}}
\providecommand{\BIBdecl}{\relax}
\BIBdecl

\bibitem{6231296}
K.~Zheng, F.~Hu, W.~Wang, W.~Xiang, and M.~Dohler, ``Radio resource allocation
  in {LTE}-advanced cellular networks with {M2M} communications,'' \emph{IEEE
  Commun. Mag.}, vol.~50, no.~7, pp. 184--192, Jul. 2012.

\bibitem{Bockelmann-2016}
C.~Bockelmann, N.~Pratas, H.~Nikopour, K.~Au, T.~Svensson, C.~Stefanovic,
  P.~Popovski, and A.~Dekorsy, ``Massive machine-type communications in {5G}:
  Physical and {MAC}-layer solutions,'' \emph{IEEE Commun. Mag.}, vol.~54,
  no.~9, pp. 59--65, Sep. 2016.

\bibitem{Laya-2014}
A.~Laya, L.~Alonso, and J.~Alonso-Zarate, ``Is the random access channel of
  {LTE} and {LTE-A} suitable for {M2M} communications? {A} survey of
  alternatives,'' \emph{IEEE Commun. Surveys Tuts.}, vol.~16, no.~1, pp. 4--16,
  First quarter 2014.

\bibitem{3gpp}
{3GPP TS 36.331 V10.50.0}, ``{Evolved Universal Terrestrial Radio Access
  (E-UTRA); Radio Resource Control (RRC)},'' Mar. 2012.

\bibitem{6720118}
T.~M. Lin, C.~H. Lee, J.~P. Cheng, and W.~T. Chen, ``{PRADA}: Prioritized
  random access with dynamic access barring for {MTC} in {3GPP LTE-A}
  networks,'' \emph{IEEE Trans. Veh. Technol.}, vol.~63, no.~5, pp. 2467--2472,
  Jun. 2014.

\bibitem{3gpp-2}
{3GPP TSG RAN WG2 $\#$71 R2-104662}, ``{MTC simulation results with specific
  solutions}.''\hskip 1em plus 0.5em minus 0.4em\relax ZTE, Madrid, Spain, Aug.
  2010.

\bibitem{6399192}
X.~Yang, A.~Fapojuwo, and E.~Egbogah, ``Performance analysis and parameter
  optimization of random access backoff algorithm in {LTE},'' in \emph{Proc.
  IEEE VTC-Fall}, Sep. 2012, pp. 1--5.

\bibitem{Dawy-2015}
Z.~Dawy, W.~Saad, A.~Ghosh, J.~G. Andrews, and E.~Yaacoub, ``Towards massive
  machine type cellular communications,'' \emph{IEEE Wireless Commun.},
  vol.~24, no.~1, pp. 120--128, Feb. 2017.

\bibitem{6175028}
C.~Y. Ho and C.~Y. Huang, ``Energy-saving massive access control and resource
  allocation schemes for {M2M} communications in {OFDMA} cellular networks,''
  \emph{IEEE Wireless Commun. Lett.}, vol.~1, no.~3, pp. 209--212, Jun. 2012.

\bibitem{Miao-2016}
G.~Miao, A.~Azari, and T.~Hwang, ``E$^2$-{MAC}: Energy efficient medium access
  for massive {M2M} communications,'' \emph{{IEEE} Trans. Commun.}, vol.~64,
  no.~11, pp. 4720--4735, Nov. 2016.

\bibitem{Pratas-2015}
N.~K. Pratas and P.~Popovski, ``Zero-outage cellular downlink with fixed-rate
  {D2D} underlay,'' \emph{{IEEE} Trans. Wireless Commun.}, vol.~14, no.~7, pp.
  3533--3543, Jul. 2015.

\bibitem{7248779}
G.~Rigazzi, N.~K. Pratas, P.~Popovski, and R.~Fantacci, ``Aggregation and
  trunking of {M2M} traffic via {D2D} connections,'' in \emph{Proc. IEEE ICC},
  Jun. 2015, pp. 2973--2978.

\bibitem{Kwon-2013}
T.~Kwon and J.~M. Cioffi, ``Random deployment of data collectors for serving
  randomly-located sensors,'' \emph{{IEEE} Trans. Wireless Commun.}, vol.~12,
  no.~6, pp. 2556--2565, Jun. 2013.

\bibitem{halim-2016}
Z.~Zhao, S.~Szyszkowicz, T.~Beitelmal, and H.~Yanikomeroglu, ``Spatial
  clustering in slotted {ALOHA} two-hop random access for machine type
  communication,'' in \emph{Proc. IEEE Globecom}, Dec. 2016, pp. 1--6.

\bibitem{Malak-2016}
D.~Malak, H.~S. Dhillon, and J.~G. Andrews, ``Optimizing data aggregation for
  uplink machine-to-machine communication networks,'' \emph{{IEEE} Trans.
  Commun.}, vol.~64, no.~3, pp. 1274--1290, Mar. 2016.

\bibitem{6687313}
N.~Abu-Ali, A.~E.~M. Taha, M.~Salah, and H.~Hassanein, ``Uplink scheduling in
  {LTE} and {LTE}-{A}dvanced: Tutorial, survey and evaluation framework,''
  \emph{IEEE Commun. Surveys Tuts.}, vol.~16, no.~3, pp. 1239--1265, Third
  quarter 2014.

\bibitem{7725919}
M.~Shirvanimoghaddam, M.~Dohler, and S.~J. Johnson, ``Massive multiple access
  based on superposition raptor codes for cellular {M2M} communications,''
  \emph{{IEEE} Trans. Wireless Commun.}, vol.~16, no.~1, pp. 307--319, Jan.
  2017.

\bibitem{6504002}
C.-H. Chang and H.-Y. Hsieh, ``Not every bit counts: A resource allocation
  problem for data gathering in machine-to-machine communications,'' in
  \emph{Proc. IEEE Globecom}, Dec. 2012, pp. 5537--5543.

\bibitem{6477828}
A.~G. Gotsis, A.~S. Lioumpas, and A.~Alexiou, ``Evolution of packet scheduling
  for machine-type communications over {LTE}: Algorithmic design and
  performance analysis,'' in \emph{Proc. IEEE Globecom Workshops}, Dec. 2012,
  pp. 1620--1625.

\bibitem{7417719}
S.~Hamdoun, A.~Rachedi, and Y.~Ghamri-Doudane, ``Radio resource sharing for
  {MTC} in {LTE-A}: An interference-aware bipartite graph approach,'' in
  \emph{Proc. IEEE Globecom}, Dec. 2015, pp. 1--7.

\bibitem{Kumar-2016}
\BIBentryALTinterwordspacing
A.~Kumar, A.~Abdelhadi, and C.~Clancy, ``A delay optimal {MAC} and packet
  scheduler for heterogeneous {M2M} uplink,'' 2016, submitted. [Online].
  Available: \url{http://arxiv.org/abs/1606.06692}
\BIBentrySTDinterwordspacing

\bibitem{Haenggi-2012}
M.~Haenggi, \emph{Stochastic Geometry for Wireless Networks}.\hskip 1em plus
  0.5em minus 0.4em\relax Cambridge University Press, 2012.

\bibitem{jeffrey-2011}
J.~G. Andrews, F.~Baccelli, and R.~K. Ganti, ``A tractable approach to coverage
  and rate in cellular networks,'' \emph{{IEEE} Trans. Commun.}, vol.~59,
  no.~11, pp. 3122--3134, Nov. 2011.

\bibitem{Guo-2016}
J.~Guo, S.~Durrani, X.~Zhou, and H.~Yanikomeroglu, ``Device-to-device
  communication underlaying a finite cellular network region,'' \emph{{IEEE}
  Trans. Wireless Commun.}, vol.~16, no.~1, pp. 332--347, Jan. 2017.

\bibitem{Azari-2016}
A.~Azari, ``Energy efficient scheduling and grouping for machine-type
  communications over cellular networks,'' \emph{Ad Hoc Netw.}, vol.~43, pp.
  16--29, Jun. 2016.

\bibitem{Dhillon-2014}
H.~S. Dhillon, H.~Huang, H.~Viswanathan, and R.~A. Valenzuela, ``Fundamentals
  of throughput maximization with random arrivals for {M2M} communications,''
  \emph{{IEEE} Trans. Commun.}, vol.~62, no.~11, Nov. 2014.

\bibitem{Singh-2015}
S.~Singh, X.~Zhang, and J.~G. Andrews, ``Joint rate and {SINR} coverage
  analysis for decoupled uplink-downlink biased cell associations in
  {HetNet}s,'' \emph{{IEEE} Trans. Wireless Commun.}, vol.~14, no.~10, pp.
  5360--5373, Oct. 2015.

\bibitem{Mohammad-2016}
\BIBentryALTinterwordspacing
M.~Gharbieh, H.~ElSawy, A.~Bader, and M.-S. Alouini, ``Spatiotemporal
  stochastic modeling of {IoT} enabled cellular networks: Scalability and
  stability analysis,'' 2016, submitted. [Online]. Available:
  \url{https://arxiv.org/abs/1609.05384}
\BIBentrySTDinterwordspacing

\bibitem{Gupta-1960}
S.~S. Gupta, ``Order statistics from the gamma distribution,''
  \emph{Technometrics}, vol.~2, no.~2, pp. 243--262, May 1960.

\bibitem{7136376}
H.~Xu, C.~Yongyu, T.~Sun, D.~Zheng, and D.~Yang, ``Modeling and analysis of
  uplink-downlink relationship in heterogeneous cellular network,'' in
  \emph{Proc. IEEE PIMRC}, Sep. 2014, pp. 1338--1342.

\bibitem{6516885}
T.~D. Novlan, H.~S. Dhillon, and J.~G. Andrews, ``Analytical modeling of uplink
  cellular networks,'' \emph{{IEEE} Trans. Wireless Commun.}, vol.~12, no.~6,
  pp. 2669--2679, Jun. 2013.

\bibitem{Mahyar-2017}
\BIBentryALTinterwordspacing
M.~Shirvanimoghaddam, M.~Dohler, and S.~Johnson, ``Massive non-orthogonal
  multiple access for cellular {IoT}: Potentials and limitations,'' \emph{IEEE
  Commun. Mag.}, 2017, (to appear). [Online]. Available:
  \url{https://arxiv.org/abs/1612.00552}
\BIBentrySTDinterwordspacing

\bibitem{David-1970}
H.~A. David and H.~N. Nagaraja, \emph{Order Statistics}.\hskip 1em plus 0.5em
  minus 0.4em\relax New York: John Wiley and Sons, 1970.

\bibitem{Pelaez-1951}
J.~Gil-Pelaez, ``Note on the inversion theorem,'' \emph{Biometrika}, vol.~38,
  no. 3/4, pp. 481--482, Dec. 1951.

\bibitem{Renzo-2014}
M.~D. Renzo and P.~Guan, ``Stochastic geometry modeling of coverage and rate of
  cellular networks using the {G}il-{P}elaez inversion theorem,'' \emph{{IEEE}
  Commun. Lett.}, vol.~18, no.~9, pp. 1575--1578, Sep. 2014.

\bibitem{Blaszczyszyn-2015}
B.~Blaszczyszyn and D.~Yogeshwaran, ``Clustering comparison of point processes,
  with applications to random geometric models,'' in \emph{Stochastic Geometry,
  Spatial Statistics and Random Fields}.\hskip 1em plus 0.5em minus 0.4em\relax
  Springer International Publishing, 2015.

\bibitem{7248753}
C.~H. Liu and L.~C. Wang, ``Random cell association and void probability in
  {P}oisson-distributed cellular networks,'' in \emph{Proc. IEEE ICC}, Jun.
  2015, pp. 2816--2821.

\bibitem{Ferenc-2007}
J.~Ferenc and Z.~Neda, ``On the size distribution of {P}oisson voronoi cells,''
  \emph{Phys. A}, vol. 385, no.~2, pp. 518--526, 2007.

\bibitem{Zhong-2013}
Y.~Zhong and W.~Zhang, ``Multi-channel hybrid access femtocells: A stochastic
  geometric analysis,'' \emph{{IEEE} Trans. Commun.}, vol.~61, no.~7, pp.
  3016--3026, Jul. 2013.

\end{thebibliography}

\end{document}